  \newcommand{\cX}{\mathcal{X}}
  \newcommand{\cY}{\mathcal{Y}}
  \newcommand{\cZ}{\mathcal{Z}}
  \newcommand{\R}{\mathbb{R}}
  \newcommand{\trans}{^T}
  \newcommand{\V}{\mathbf{V}}
  \newcommand{\0}{0}
  \renewcommand{\b}{b}
  \renewcommand{\c}{c}
  \newcommand{\e}{e}
  \newcommand{\f}{f}
  \newcommand{\g}{g}
  \renewcommand{\u}{u}
  \renewcommand{\v}{v}
  \newcommand{\w}{w}
  \newcommand{\x}{x}
  \newcommand{\y}{y}
  \newcommand{\z}{z}
  \newcommand{\rank}{\mathrm{rank\;}}
  \newcommand{\tr}{\mathop{\mathrm{tr}}\nolimits}
  \newcommand{\cE}{\mathcal{E}}
  \newcommand{\cS}{\mathcal{S}}
\begin{document}

\title*{Two algorithms for compressed\\sensing of sparse tensors}
\author{Shmuel Friedland, Qun Li, Dan Schonfeld and Edgar A. Bernal}
\institute{Shmuel Friedland \at Department of Mathematics,
Statistics and Computer Science,
 University of Illinois at Chicago,
 Chicago, Illinois 60607-7045, USA. This work was supported by NSF grant
 DMS-1216393.
\email{friedlan@uic.edu} \and Qun Li \at PARC, Xerox Corporation,
800 Phillips Road, Webster, New York 14580, USA.
\email{Qun.Li@xerox.com} \and Dan Schonfeld \at Department of
Electrical and Computer Engineering, University of Illinois at
Chicago, Chicago, Illinois 60607, USA. \email{dans@uic.edu} \and
Edgar A. Bernal \at PARC, Xerox Corporation, 800 Phillips Road,
Webster, New York 14580, USA. \email{Edgar.Bernal@xerox.com}}

\maketitle

\abstract{Compressed sensing (CS) exploits the sparsity of a
signal in order to integrate acquisition and compression. CS
theory enables exact reconstruction of a sparse signal from
relatively few linear measurements via a suitable nonlinear
minimization process. Conventional CS theory relies on vectorial
data representation, which results in good compression ratios at
the expense of increased computational complexity. In applications
involving color images, video sequences, and multi-sensor
networks, the data is intrinsically of high-order, and thus more
suitably represented in tensorial form. Standard applications of
CS to higher-order data typically involve representation of the
data as long vectors that are in turn measured using large
sampling matrices, thus imposing a huge computational and memory
burden. In this chapter, we introduce Generalized Tensor
Compressed Sensing (GTCS)--a unified framework for compressed
sensing of higher-order tensors which preserves the intrinsic
structure of tensorial data with reduced computational complexity
at reconstruction. We demonstrate that GTCS offers an efficient
means for representation of multidimensional data by providing
simultaneous acquisition and compression from all tensor modes. In
addition, we propound two reconstruction procedures, a serial
method (GTCS-S) and a parallelizable method (GTCS-P), both capable
of recovering a tensor based on noiseless and noisy observations.
We then compare the performance of the proposed methods with
Kronecker compressed sensing (KCS) and multi-way compressed
sensing (MWCS). We demonstrate experimentally that GTCS
outperforms KCS and MWCS in terms of both reconstruction accuracy
(within a range of compression ratios) and processing speed. The
major disadvantage of our methods (and of MWCS as well), is that
the achieved compression ratios may be worse than those offered by
KCS.}

\section{Introduction}
Compressed sensing \cite{CS1, CS2} is a framework for
reconstructing signals that have sparse representations. A vector
$\x\in\R^N$ is called $k$-\emph{sparse} if $\x$ has at most $k$
nonzero entries. The sampling scheme can be modelled by a linear
operation. Assuming the number of measurements $m$ satisfies
$m<N$, and $A\in\R^{m\times N}$ is the matrix used for sampling,
then the encoded information is $\y\in \R^m$, where $\y=A\x$. The
decoder knows $A$ and recovers $\y$ by finding a solution $\hat
\z\in\R^N$ satisfying
\begin{equation}\label{l1minrec}
\hat{\z}=\arg \min_\z\|\z\|_1\quad \text{s.t.}\quad \y = A\z.
\end{equation}
Since $\|\cdot\|$ is a convex function and the set of all $\z$
satisfying $\y = A\z$ is convex, minimizing Eq.~\eqref{l1minrec}
is polynomial in $N$. Each $k$-sparse solution can be recovered
uniquely if $A$ satisfies the null space property (NSP) of order
$k$, denoted as NSP$_k$ \cite{NSP}. Given $A\in\R^{m\times N}$
which satisfies the NSP$_k$ property, a $k$-sparse signal
$\x\in\R^N$ and samples $\y=A\x$, recovery of $\x$ from $\y$ is
achieved by finding the $\z$ that minimizes Eq.~\eqref{l1minrec}.
One way to generate such $A$ is by sampling its entries using
numbers generated from a Gaussian or a Bernoulli distribution.
This matrix generation process guarantees that there exists a
universal constant $c$ such that if
\begin{equation}\label{nspcond}
m\ge 2ck\ln \frac{N}{k},
\end{equation}
then the recovery of $x$ using Eq.~\eqref{l1minrec} is successful
with probability greater than $1 - \exp(-\frac{m}{2c})$
\cite{Rauhut_compressivesensing}.

The objective of this document is to consider the case where the
$k$-sparse vector $\x$ is represented as a $k$-sparse tensor
$\cX=[x_{i_1,i_2,\ldots,i_d}]\in \R^{N_1\times N_2\times
\ldots\times N_d}$. Specifically, in the sampling phase, we
construct a set of measurement matrices $\{U_1,U_2,\ldots,U_d\}$
for all tensor modes, where $U_i\in \R^{m_i\times N_i}$ for
$i=1,2, \ldots,d$, and sample $\cX$ to obtain $\cY=\cX\times_1
U_1\times_2 U_2\times \ldots \times_d U_d\in\R^{m_1\times
m_2\times \ldots\times m_d}$ (see Sec.~\ref{multilinearAlgebra}
for a detailed description of tensor mode product notation). Note
that our sampling method is mathematically equivalent to that
proposed in \cite{KCS}, where $A$ is expressed as a Kronecker
product $A:=U_1\otimes U_2\otimes\ldots\otimes U_d$, which
requires $m$ to satisfy
\begin{equation}\label{nspcondmd}
m\ge 2ck(-\ln k+\sum_{i=1}^d\ln N_i).
\end{equation} We show that if each
$U_i$ satisfies the NSP$_k$ property, then we can recover $\cX$
uniquely from $\cY$ by solving a sequence of $\ell_1$ minimization
problems, each similar to the expression in Eq.~\eqref{l1minrec}.
This approach is advantageous relative to vectorization-based
compressed sensing methods such as that from \cite{KCS} because
the corresponding recovery problems are in terms of $U_i$'s
instead of $A$, which results in greatly reduced complexity. If
the entries of $U_i$ are sampled from Gaussian or Bernoulli
distributions, the following set of conditions needs to be
satisfied:
\begin{equation}\label{nspcondmdi}
m_i\ge 2ck\ln \frac{N_i}{k}, \quad i=1,\ldots,d.
\end{equation}
Observe that the dimensionality of the original signal $\cX$,
namely $N=N_1\cdot\ldots\cdot N_d$, is compressed to
$m=m_1\cdot\ldots\cdot m_d$. Hence, the number of measurements
required by our method must satisfy
\begin{equation}\label{totcompineq}
m\ge (2ck)^d \prod_{i=1}^d  \ln \frac{N_i}{k},
\end{equation}
which indicates a worse compression ratio than that from
Eq.~\eqref{nspcondmd}. This is consistent with the observations
from \cite{FLS13} (see Fig. 4(a) in \cite{FLS13}). We first
discuss our method for matrices, i.e., $d=2$, and then for
tensors, i.e., $d\ge 3$.

\section{Compressed Sensing of Matrices}
\subsection{Vector and Matrix Notation}
Column vectors are denoted by italic letters as
$\x=(x_1,\ldots,x_N)\trans\in\R^N$. Norms used for vectors include
\[\|\x\|_2:=\sqrt{\sum_{i=1}^N x_i^2}, \quad \|\x\|_1:=\sum_{i=1}^N |x_i|. \]

Let $[N]$ denote the set $\{1, 2, \ldots,N\}$, where $N$ is a
positive integer. Let $S\subset [N]$. We use the following
notation: $|S|$ is the cardinality of set $S$, $S^c:=[N]\setminus
S$, and $\|\x_S\|_1:=\sum_{i\in S}|x_i|$.

Matrices are denoted by capital italic letters as
$A=[a_{ij}]\in\R^{m\times N}$. The transposes of $\x$ and $A$ are
denoted by $\x\trans$ and $A\trans$ respectively. Norms of
matrices used include the Frobenius norm
$\|A\|_F:=\sqrt{\mathrm{tr}\; (AA\trans)}$, and the spectral norm
$\|A\|_2:=\max_{\|\x\|_2=1} \|A\x\|_2$. Let $R(X)$ denote the
column space of $X$. The singular value decomposition (SVD)
\cite{SVD} of $A$ with $\rank (A) = r$ is:
\begin{equation}\label{SVDA}
A=\sum_{i=1}^r (\sqrt{\sigma_i}\u_i)(\sqrt{\sigma_i}\v_i)\trans, \quad  \u_i\trans\u_j=\v_i\trans \v_j=\delta_{ij},\;i,j\in[r].
\end{equation}
Here, $\sigma_1(A)=\sigma_1\ge \ldots \ge \sigma_r(A)=\sigma_r>0$
are all positive singular values of $A$. $\u_i$ and $\v_i$ are the
left and the right singular vectors of $A$ corresponding to
$\sigma_i$.  Recall that
\[A\v_i=\sigma_i \u_i, \;A\trans \u_i=\sigma_i\v_i, \quad i\in[r], \quad \|A\|_2=\sigma_1(A), \quad \|A\|_F=\sqrt{\sum_{i=1}^r \sigma_i^2(A)}.\]
For $k< r$, let
$$A_k:=\sum_{i=1}^k (\sqrt{\sigma_i}\u_i)(\sqrt{\sigma_i}\v_i)\trans.$$
For $k\ge r$,  we have $A_k:=A$. Then $A_k$ is a solution to the
following minimization problems:
\begin{eqnarray*}
&&\min_{B\in\R^{m\times N}, \rank (B)\le k} \|A-B\|_F=\|A-A_k\|_F=\sqrt{\sum_{i=k+1}^r \sigma_i^2(A)},\\
&&\min_{B\in\R^{m\times N}, \rank (B)\le k}
\|A-B\|_2=\|A-A_k\|_2=\sigma_{k+1}(A).
\end{eqnarray*}
We call $A_k$ the best rank-$k$ approximation to $A$. Note that
$A_k$ is unique if and only if $\sigma_j(A)>\sigma_{j+1}(A)$ for
$j\in [k-1]$.

$A\in \R^{m\times N}$ satisfies the \emph{null space property of
order} $k$, abbreviated as NSP$_k$ property, if the following
condition holds: let $A\w=\0, \w\ne \0$; then for each $S\subset
[N]$ satisfying $|S|=k$, the inequality
$\|\w_S\|_1<\|\w_{S^c}\|_1$ is satisfied.

Let $\Sigma_{k,N}\subset \R^N$ denote all vectors in $\R^N$ which
have at most $k$ nonzero entries. The fundamental lemma of
noiseless recovery in compressed sensing that has been introduced
in Chapter 1 is:
\begin{lemma}\label{fundlemnrec}
Suppose that $A\in\R^{m\times N}$ satisfies the NSP$_k$ property.
Assume that $\x\in\Sigma_{k,N}$ and let $\y=A\x$.  Then for each
$\z\in\R^N$ satisfying $A\z=\y$, $\|\z\|_1\ge \|\x\|_1$.  Equality
holds if and only if~$\z=\x$. That is, $\x= \arg \min_{\z}
\|\z\|_1 \quad \text{s.t.}\quad \y = A\z.$  The complexity of this
minimization problem is $O(N^3)$ \cite{Complexity, Fast}.
\end{lemma}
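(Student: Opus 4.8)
The plan is to prove this in two parts: first the inequality $\|\z\|_1 \ge \|\x\|_1$ for any $\z$ with $A\z = \y$, and then the characterization of the equality case. The natural device is the difference vector $\w := \z - \x$, which lies in the null space of $A$ since $A\z = A\x = \y$. If $\w = \0$ we are done trivially with equality, so assume $\w \ne \0$, which is exactly the hypothesis needed to invoke the NSP$_k$ property.

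\medskip

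First I would fix $S \subset [N]$ to be (a superset of size exactly $k$ of) the support of $\x$, so that $\x_{S^c} = \0$. Then I would split the $\ell_1$ norm of $\z = \x + \w$ over $S$ and $S^c$ and apply the triangle inequality in each block:
\begin{equation*}
\|\z\|_1 = \|\x_S + \w_S\|_1 + \|\w_{S^c}\|_1 \ge \|\x_S\|_1 - \|\w_S\|_1 + \|\w_{S^c}\|_1 = \|\x\|_1 + \big(\|\w_{S^c}\|_1 - \|\w_S\|_1\big).
\end{equation*}
Since $\w$ is a nonzero null vector and $|S| = k$, the NSP$_k$ property gives $\|\w_S\|_1 < \|\w_{S^c}\|_1$, so the parenthesized term is strictly positive and hence $\|\z\|_1 > \|\x\|_1$ whenever $\w \ne \0$. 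Combined with the trivial equality when $\w = \0$, this yields $\|\z\|_1 \ge \|\x\|_1$ always, with equality forcing $\w = \0$, i.e.\ $\z = \x$. The statement that $\x$ is the unique $\ell_1$ minimizer subject to $\y = A\z$ is then immediate. The complexity claim $O(N^3)$ is a cited fact about solving the linear program / basis pursuit problem and requires no argument here.

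\medskip

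I do not anticipate a serious obstacle; the only point requiring a little care is the strictness of the inequality and its role in the equality characterization. One must be slightly careful that NSP$_k$ as stated applies to \emph{every} $S$ of size exactly $k$, so when $\x$ has fewer than $k$ nonzeros one should enlarge its support to a set $S$ of size exactly $k$ (still with $\x_{S^c} = \0$) before applying the property; this is a harmless bookkeeping step. Everything else is the triangle inequality applied blockwise.
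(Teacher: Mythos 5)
Your argument is correct and complete: the blockwise split of $\|\z\|_1$ over a size-$k$ superset $S$ of the support of $\x$, the reverse triangle inequality on $S$, and the strict inequality $\|\w_S\|_1<\|\w_{S^c}\|_1$ from NSP$_k$ together give $\|\z\|_1>\|\x\|_1$ whenever $\z\ne\x$, which yields both the minimality and the uniqueness claims; the care you take to enlarge the support to exactly $k$ indices is indeed necessary under the paper's formulation of NSP$_k$. The paper itself states this lemma without proof (deferring to Chapter~1 of the volume and the cited references), so there is nothing to compare against, but your proof is the standard one and is exactly what that omitted argument would be.
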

\subsection{Noiseless Recovery}
\subsubsection{Compressed Sensing of Matrices - Serial
Recovery (CSM-S)} \label{sec:2} The serial recovery method for
compressed sensing of matrices in the noiseless case is described
by the following theorem.
\begin{theorem}[CSM-S]\label{CSM-S}
\label{compsensmatS} Let $X=[x_{ij}]\in\R^{N_1\times N_2}$ be
$k$-sparse.  Let $U_i\in \R^{m_i\times N_i}$ and assume that $U_i$
satisfies the NSP$_k$ property for $i\in [2]$.  Define
\begin{equation}\label{defmatY}
Y=[y_{pq}]=U_1 X U_2\trans \in \R^{m_1\times m_2}.
\end{equation}
Then $X$ can be recovered uniquely as follows. Let
$\y_1,\ldots,\y_{m_2}\in\R^{m_1}$ be the columns of $Y$. Let $\hat
\z_i\in\R^{N_1}$ be a solution of
\begin{equation}\label{defzstarMatrix}
\hat \z_i=\arg\min_{\z_i}\|\z_i\|_1 \quad \text{s.t.}\quad \y_i =
U_1\z_i, \quad i\in [m_2].
\end{equation}
Then each $\hat \z_i$ is unique and $k$-sparse.  Let
$Z\in\R^{N_1\times m_2}$ be the matrix whose columns are $\hat
\z_1,\ldots,\hat \z_{m_2}$. Let $\w_1\trans,\ldots,
\w_{N_1}\trans$ be the rows of $Z$.  Then $\v_j\in\R^{N_2}$, whose
transpose is the $j$-th row of $X$, is the solution of
\begin{equation}\label{defustarMatrix}
\hat \v_j=\arg\min_{\v_j}\|\v_j\|_1 \quad \text{s.t.}\quad \w_j=
U_2 \v_j, \quad j\in [N_1].
\end{equation}
\end{theorem}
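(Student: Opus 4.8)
The plan is to exploit the associativity of the tensor mode products to split the two-mode measurement into two successive one-mode measurements, and then apply Lemma~\ref{fundlemnrec} repeatedly. First I would observe that $Y=U_1XU_2\trans$ can be read column-by-column: the $i$-th column $\y_i$ of $Y$ equals $U_1$ times the $i$-th column of $XU_2\trans$. Writing $C:=XU_2\trans\in\R^{N_1\times m_2}$, each column $\c_i$ of $C$ is a $\R$-linear combination of the columns of $X$, hence is supported on the union of the supports of the columns of $X$; since $X$ is $k$-sparse, $\c_i\in\Sigma_{k,N_1}$. Applying Lemma~\ref{fundlemnrec} to $A=U_1$, $\x=\c_i$, $\y=\y_i$, the minimization in Eq.~\eqref{defzstarMatrix} has the unique solution $\hat\z_i=\c_i$, which is $k$-sparse. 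Therefore the reconstructed matrix $Z$ of the theorem is exactly $C=XU_2\trans$.

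Next I would transpose the picture. From $Z=XU_2\trans$ we get $Z\trans=U_2X\trans$, so the $j$-th column of $Z\trans$ — that is, $\w_j$, the transpose of the $j$-th row of $Z$ — equals $U_2$ times the $j$-th column of $X\trans$, i.e. $\w_j=U_2\v_j$ where $\v_j\trans$ is the $j$-th row of $X$. Again $\v_j$ lies in $\Sigma_{k,N_2}$ because $X$ is $k$-sparse, so every row of $X$ has at most $k$ nonzero entries. Applying Lemma~\ref{fundlemnrec} to $A=U_2$, $\x=\v_j$, $\y=\w_j$ shows that Eq.~\eqref{defustarMatrix} has the unique solution $\hat\v_j=\v_j$. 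Assembling the rows $\hat\v_j\trans$ for $j\in[N_1]$ recovers $X$ exactly and uniquely.

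The only genuinely substantive point — and the one I would state carefully — is the sparsity propagation: one must check that $k$-sparsity of the \emph{matrix} $X$ (at most $k$ nonzero entries in total) forces each column of $XU_2\trans$ and each row of $X$ to be $k$-sparse \emph{as vectors}. For the rows this is immediate (a row cannot have more nonzero entries than the whole matrix). For the columns of $C=XU_2\trans$, the key is that right-multiplication by $U_2\trans$ takes linear combinations of columns of $X$, and the column support of $X$ — the set of row indices $i$ for which some $x_{ij}\neq 0$ — has size at most $k$; every column of $C$ is supported within this set. This is exactly what is needed to invoke the NSP$_k$ hypothesis on $U_1$ and $U_2$. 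Everything else is bookkeeping with transposes and columns, so I expect no further obstacle; uniqueness at each stage is inherited verbatim from the "equality iff $\z=\x$" clause of Lemma~\ref{fundlemnrec}.
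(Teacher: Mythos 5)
Your proposal is correct and takes essentially the same route as the paper's own proof: both identify $Z=XU_2\trans$, argue that each of its columns is a linear combination of the columns of $X$ and hence $k$-sparse, invoke the NSP$_k$ property of $U_1$ (via Lemma~\ref{fundlemnrec}) to recover them uniquely, and then transpose and repeat with $U_2$ on the $k$-sparse rows of $X$. Your explicit remark on why the column support of $XU_2\trans$ stays within a set of at most $k$ row indices merely spells out the step the paper states in one line.
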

\begin{proof}
Let $Z$ be the matrix whose columns are $\hat \z_1,\ldots,\hat
\z_{m_2}$.  Then $Z$ can be written as
$Z=XU_2\trans\in\R^{N_1\times m_2}$. Note that $\hat \z_i$ is a
linear combination of the columns of $X$. $\hat \z_i$ has at most
$k$ nonzero coordinates, because the total number of nonzero
elements in $X$ is $k$. Since $Y=U_1 Z$, it follows that
$\y_i=U_1\hat \z_i$. Also, since $U_1$ satisfies the NSP$_k$
property, we arrive at Eq.~\eqref{defzstarMatrix}. Observe that
$Z\trans=U_2 X\trans$; hence, $\w_j=U_2 \hat \v_j$. Since $X$ is
$k$-sparse, then each $\hat \v_j$ is $k$-sparse.  The assumption
that $U_2$ satisfies the NSP$_k$ property implies
Eq.~\eqref{defustarMatrix}. \qed
\end{proof}
If the entries of $U_1$ and $U_2$ are drawn from random
distributions as described above, then the set of conditions from
Eq.~\eqref{nspcondmdi} needs to be met as well. Note that although
Theorem \ref{CSM-S} requires both $U_1$ and $U_2$ to satisfy the
NSP$_k$ property, such constraints can be relaxed if each row of
$X$ is $k'$-sparse, where $k'<k$.  In this case, it follows from
the proof of Theorem \ref{CSM-S} that $X$ can be recovered as long
as $U_1$ and $U_2$ satisfy the NSP$_k$ and the NSP$_{k'}$
properties respectively.

\subsubsection{Compressed Sensing of Matrices - Parallelizable Recovery (CSM-P)}
The parallelizable recovery method for compressed sensing of
matrices in the noiseless case is described by the following
theorem.
\begin{theorem}[CSM-P]\label{CSM-P}
Let $X=[x_{ij}]\in\R^{N_1\times N_2}$ be $k$-sparse.  Let $U_i\in
\R^{m_i\times N_i}$ and assume that $U_i$ satisfies the NSP$_k$
property for $i \in [2]$.  If Y is given by Eq.~\eqref{defmatY},
then $X$ can be recovered approximately as follows. Consider a
rank decomposition (e.g., SVD) of $Y$ such that
\begin{equation}\label{ranklikedec1Y}
Y=\sum_{i=1}^K \b_i^{(1)} (\b_i^{(2)})\trans,
\end{equation}
where $K = \rank(Y)$. Let $\hat \w_i^{(j)}\in \mathbb{R}^{N_j}$ be
a solution of
\begin{equation}\label{defzstarP}
\hat \w_i^{(j)}=\arg\min_{\w_i}\|\w_i^{(j)}\|_1 \quad
\text{s.t.}\quad \b_i^{(j)} = U_j\w_i^{(j)}, \quad i\in [K], j\in
[2].\nonumber
\end{equation}
Then each $\hat \w_i^{(j)}$ is unique and $k$-sparse, and
\begin{equation}\label{ranklikedecYP}
X=\sum_{i=1}^K \hat \w_i^{(1)} (\hat \w_i^{(2)})\trans.
\end{equation}
\end{theorem}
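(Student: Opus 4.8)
\emph{Proof proposal.} The plan is to reduce both families of $\ell_1$ minimizations to the fundamental Lemma~\ref{fundlemnrec} by producing, for each $i\in[K]$ and $j\in[2]$, an \emph{explicit} $k$-sparse vector satisfying the linear constraint $\b_i^{(j)}=U_j\w$; uniqueness and $k$-sparsity of $\hat\w_i^{(j)}$ then follow at once, and the reconstruction identity~\eqref{ranklikedecYP} will drop out of a change-of-basis computation between two rank decompositions of $Y$.

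First I would record the relevant structure of $X$. Writing $r=\rank(X)$, since $X$ is $k$-sparse it has at most $k$ nonzero rows and at most $k$ nonzero columns, so every vector in the column space $R(X)$ and every vector in the row space $R(X\trans)$ is $k$-sparse, and $r\le k$. Fix a rank decomposition $X=\sum_{l=1}^{r}\c_l\mathbf{d}_l\trans$ with $\{\c_l\}\subset R(X)$ and $\{\mathbf{d}_l\}\subset R(X\trans)$ each linearly independent; every $\c_l$ and every $\mathbf{d}_l$ is $k$-sparse. Next I would show $K=\rank(Y)=r$. From $Y=U_1XU_2\trans=\sum_{l=1}^{r}(U_1\c_l)(U_2\mathbf{d}_l)\trans$ it is enough to see that $\{U_1\c_l\}$ and $\{U_2\mathbf{d}_l\}$ are linearly independent. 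This is the one place the NSP$_k$ hypothesis is needed: if $U_1\w=\0$ for a $k$-sparse $\w$, then taking $S=\mathrm{supp}(\w)$ in the NSP$_k$ inequality forces $\|\w_S\|_1<\|\w_{S^c}\|_1=0$, hence $\w=\0$; applied to a vanishing combination $\sum_l\alpha_l U_1\c_l=U_1(\sum_l\alpha_l\c_l)$, whose argument lies in $R(X)$ and is therefore $k$-sparse, this gives $\sum_l\alpha_l\c_l=\0$ and then $\alpha=0$, and symmetrically for $U_2$ and the $\mathbf{d}_l$. So $Y$ has a genuine rank-$r$ decomposition and $K=r$.

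Now I would compare this decomposition $Y=\sum_{l=1}^{K}(U_1\c_l)(U_2\mathbf{d}_l)\trans$ with the given one $Y=\sum_{i=1}^{K}\b_i^{(1)}(\b_i^{(2)})\trans$. Both $\{\b_i^{(1)}\}$ and $\{U_1\c_l\}$ are bases of the $K$-dimensional column space $R(Y)$, so there is an invertible $G=[g_{li}]\in\R^{K\times K}$ with $\b_i^{(1)}=\sum_l g_{li}U_1\c_l=U_1\mathbf{p}_i$, where $\mathbf{p}_i:=\sum_l g_{li}\c_l\in R(X)$ is $k$-sparse. Substituting back into $Y=\sum_i\b_i^{(1)}(\b_i^{(2)})\trans$ and cancelling the independent vectors $U_1\c_l$ on the left yields $\b_i^{(2)}=U_2\mathbf{q}_i$ with $\mathbf{q}_i:=\sum_l (G^{-1})_{il}\mathbf{d}_l\in R(X\trans)$, again $k$-sparse. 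By Lemma~\ref{fundlemnrec} applied to $U_1$ with the $k$-sparse feasible point $\mathbf{p}_i$ (resp. to $U_2$ with $\mathbf{q}_i$), the minimizer is unique and equals $\hat\w_i^{(1)}=\mathbf{p}_i$ (resp. $\hat\w_i^{(2)}=\mathbf{q}_i$), which is $k$-sparse. Finally $\sum_{i=1}^{K}\hat\w_i^{(1)}(\hat\w_i^{(2)})\trans=\sum_i\mathbf{p}_i\mathbf{q}_i\trans=\sum_{l,n}\big(\sum_i g_{li}(G^{-1})_{in}\big)\c_l\mathbf{d}_n\trans=\sum_l\c_l\mathbf{d}_l\trans=X$, using $GG^{-1}=I$.

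The main obstacle is the rank-preservation step $\rank(U_1XU_2\trans)=\rank(X)$: it is exactly what guarantees the compression does not collapse independent directions, and, together with the fact that the $k$-sparse subspaces $R(X)$ and $R(X\trans)$ are invariant under the change of basis $G$, it is the only nontrivial ingredient — everything else is Lemma~\ref{fundlemnrec} and bookkeeping. (One should also note the final claim is only an \emph{approximate} recovery in general, since a generic rank decomposition of $Y$ need not be the one induced by $X$; the argument above shows it nonetheless reconstructs $X$ exactly in the noiseless case, because any rank decomposition of $Y$ is $G$-equivalent to the induced one.)
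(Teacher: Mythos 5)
Your proof is correct, but it takes a genuinely different route from the paper's. The paper argues by contradiction: it first notes $R(Y)\subset U_1R(X)$ and $R(Y\trans)\subset U_2R(X\trans)$, so each $\b_i^{(j)}$ has a $k$-sparse preimage in $R(X)$ or $R(X\trans)$ and uniqueness of $\hat\w_i^{(j)}$ follows from Lemma~\ref{fundlemnrec}; it then sets $\hat X=\sum_i\hat\w_i^{(1)}(\hat\w_i^{(2)})\trans$, takes a rank decomposition of $X-\hat X$ (whose column and row spaces still consist of $k$-sparse vectors), and invokes the Appendix lemma --- NSP$_k$ maps linearly independent vectors of a $k$-sparse subspace to linearly independent vectors --- to conclude that $U_1(X-\hat X)U_2\trans$ would have rank $\rank(X-\hat X)>0$, contradicting $U_1(X-\hat X)U_2\trans=Y-Y=0$. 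You instead make everything constructive: you prove the same key rank-preservation fact inline, deduce $K=\rank(Y)=\rank(X)$, and relate the given rank decomposition of $Y$ to the one induced by a rank decomposition $X=\sum_l \c_l\mathbf{d}_l\trans$ through an invertible change of basis $G$, which lets you identify $\hat\w_i^{(1)}=\mathbf{p}_i$ and $\hat\w_i^{(2)}=\mathbf{q}_i$ explicitly and verify $\sum_i\mathbf{p}_i\mathbf{q}_i\trans=X$ via $GG^{-1}=I$. Both arguments hinge on the identical core fact (injectivity of $U_j$ on the $k$-sparse subspaces $R(X)$ and $R(X\trans)$, which you correctly derive from NSP$_k$ after enlarging the support to size $k$). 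The paper's contradiction argument avoids the change-of-basis bookkeeping and carries over more directly to the tensor setting of Theorem~\ref{TheoremGTCSP}; your version buys an explicit description of the recovered factors and makes the equality $\rank(Y)=\rank(X)$ (implicit in the paper) a stated intermediate step. Your closing parenthetical about ``approximate'' recovery is harmless but unnecessary: as your own argument shows, any rank decomposition of $Y$ is $G$-equivalent to the induced one, so recovery is exact in the noiseless case.
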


\begin{proof}  First observe that $R( Y)\subset U_1R( X)$ and $R( Y\trans)\subset U_2R( X\trans)$.
Since Eq.~\eqref{ranklikedec1Y} is a rank decomposition of $Y$, it
follows that $\b_i^{(1)}\in U_1R( X)$ and $\b_i^{(2)}\in U_2R(
X\trans)$. Hence $\hat \w_i^{(1)}\in R( X), \hat \w_i^{(2)}\in R(
X\trans)$ are unique and $k$-sparse.  Let $\hat X:=\sum_{i=1}^K
\hat \w_i^{(1)} (\hat \w_i^{(2)})\trans$. Assume to the contrary
that $X-\hat X\ne 0$.  Clearly $R(X-\hat X)\subset R (X), R
(X\trans - \hat X\trans)\subset R( X\trans)$. Let $X-\hat
X=\sum_{i=1}^{J} \u_i^{(1)} (\u_i^{(2)})\trans$ be a rank
decomposition of $X-\hat X$.  Hence
$\u_1^{(1)},\ldots,\u_J^{(1)}\in R( X)$ and
$\u_1^{(2)},\ldots,\u_J^{(2)}\in R( X\trans)$ are two sets of $J$
linearly independent vectors.   Since each vector either in $R(
X)$ or in $R( X\trans)$ is $k$-sparse, and $U_1, U_2$ satisfy the
NSP$_k$ property, it follows that $U_1\u_1^{(j)},\ldots,
U_1\u_J^{(j)}$ are linearly independent for $j\in [2]$ (see
Appendix for proof).  Hence the matrix
$Z:=\sum_{i=1}^J(U_1\u_i^{(1)}) (U_2\u_i^{(2)})\trans$ has rank
$J$. In particular, $Z\ne 0$.  On the other hand, $Z=U_1(X-\hat
X)U_2\trans=Y-Y=0$, which contradicts the previous statement.  So
$X=\hat X$. \qed
\end{proof}

The above recovery procedure consists of two stages, namely, the
decomposition stage and the reconstruction stage, where the latter
can be implemented in parallel for each matrix mode. Note that the
above theorem is equivalent to multi-way compressed sensing for
matrices (MWCS) introduced in \cite{MWCS}.

\subsubsection{Simulation Results}\label{NoiselessImageSimu}
We demonstrate experimentally the performance of GTCS methods on
the reconstruction of sparse images and video sequences. As
demonstrated in \cite{KCS}, KCS outperforms several other methods
including independent measurements and partitioned measurements in
terms of reconstruction accuracy in tasks related to compression
of multidimensional signals. A more recently proposed method is
MWCS, which stands out for its reconstruction efficiency. For the
above reasons, we compare our methods with both KCS and MWCS. Our
experiments use the $\ell_1$-minimization solvers from \cite{l1}.
We set the same threshold to determine the termination of the
$\ell_1$-minimization process in all subsequent experiments. All
simulations are executed on a desktop with a 2.4 GHz Intel Core i7
CPU and 16GB RAM.

The original grayscale image (see Fig. \ref{original}) is of size
$128\times 128$ pixels ($N=16384$). We use the discrete cosine
transform (DCT) as the sparsifying transform, and zero-out the
coefficients outside the $16\times 16$ sub-matrix in the upper
left corner of the transformed image. We refer to the inverse DCT
of the resulting sparse set of transform coefficients as the
target image. Let $m$ denote the number of measurements along both
matrix modes; we generate the measurement matrices with entries
drawn from a Gaussian distribution with mean 0 and standard
deviation $\sqrt{\frac{1}{m}}$. For simplicity, we set the number
of measurements for two modes to be equal; that is, the randomly
constructed Gaussian matrix $U$ is of size $m\times 128$ for each
mode. Therefore, the KCS measurement matrix $U\otimes U$ is of
size $m^2\times 16384$, and the total number of measurements is
$m^2$. We refer to $\frac{m^2}{N}$ as the normalized number of
measurements. For GTCS, both the serial recovery method GTCS-S and
the parallelizable recovery method GTCS-P are implemented. In the
matrix case, for a given choice of rank decomposition method,
GTCS-P and MWCS are equivalent; in this case, we use SVD as the
rank decomposition approach. Although the reconstruction stage of
GTCS-P is parallelizable, we recover each vector in series.
Consequently, we note that the reported performance data for
GTCS-P can be improved upon. We examine the performance of the
above methods by varying the normalized number of measurements
from 0.1 to 0.6 in steps of 0.1. Reconstruction performance for
the different methods is compared in terms of reconstruction
accuracy and computational complexity. Reconstruction accuracy is
measured via the peak signal to noise ratio (PSNR) between the
recovered and the target image (both in the spatial domain),
whereas computational complexity is measured in terms of the
reconstruction time (see Fig. \ref{image_noiseless}).

\begin{figure}[htb]
\begin{center}
\includegraphics[width=0.4\linewidth]{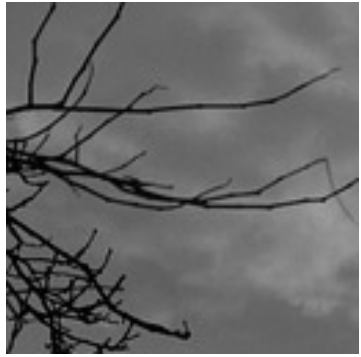}
\caption{The original grayscale image.} \label{original}
\end{center}
\end{figure}

\begin{figure}[htb]
\begin{center}
\subfigure[PSNR
comparison]{\label{image_noiseless_PSNR}\includegraphics[width=0.49\linewidth]{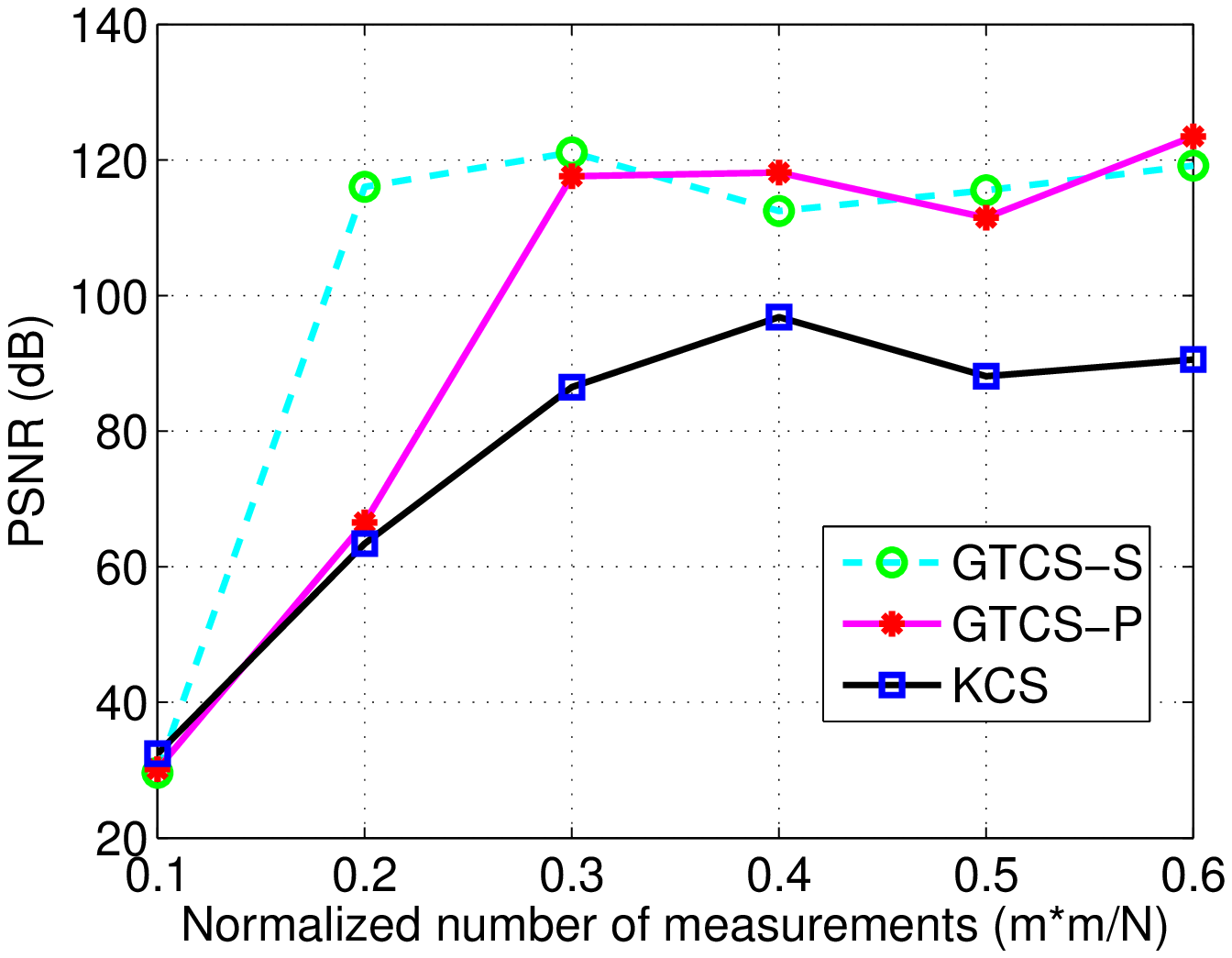}}
\subfigure[Recovery time
comparison]{\label{UICTime}\includegraphics[width=0.49\linewidth]{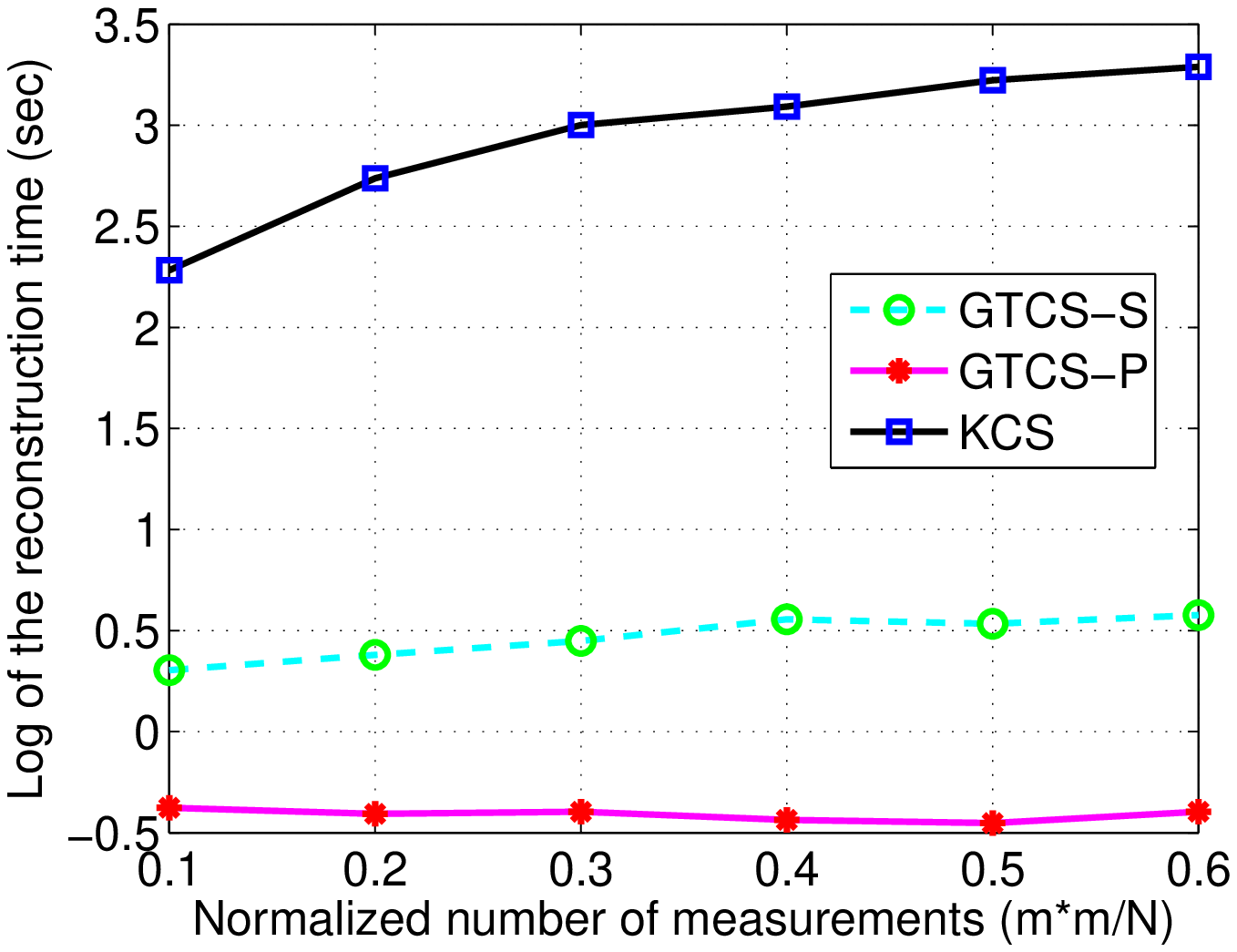}}
\caption{Performance comparison among the tested methods in terms
of PSNR and reconstruction time in the scenario of noiseless
recovery of a sparse image.} \label{image_noiseless}
\end{center}
\end{figure}

\subsection{Recovery of Data in the Presence of Noise}
Consider the case where the observation is noisy. For a given
integer $k$, a matrix $A\in \R^{m\times N}$ satisfies the
restricted isometry property (RIP$_k$) \cite{RIP} if
 $$(1-\delta_k)\|\x\|^2_2\le \|A\x\|^2_2\le (1+\delta_k)\|\x\|^2_2$$
 for all $\x\in\Sigma_{k,N}$ and for some $\delta_k\in(0,1)$.

It was shown in \cite{C2} that the reconstruction in the presence
of noise is achieved by solving
\begin{equation}\label{noisyrecoveryEqu}
\hat \x=\arg\min_{\z} \|\z\|_1, \quad \text{s.t.} \quad
\|A\z-\y\|_2 \le \varepsilon,
\end{equation}

which has complexity $O(N^3)$.

\begin{lemma}\label{sigrecer}  Assume that $A\in \R^{m\times N}$ satisfies the RIP$_{2k}$ property for some $\delta_{2k}\in (0, \sqrt{2}-1)$.
Let $\x\in\Sigma_{k,N},  \y=A\x+\e,$ where $e$ denotes the noise
vector, and $ \|\e\|_2\le \varepsilon$ for some real nonnegative
number $\varepsilon$. Then
\begin{equation}\label{sigrecer1}
 \|\hat \x -\x\|_2\le C_2\varepsilon,  \textrm{ where } C_2=\frac{4\sqrt{1+\delta_{2k}}}{1-(1+\sqrt{2})\delta_{2k}}.
\end{equation}
\end{lemma}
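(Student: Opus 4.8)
The plan is to follow the now-classical argument of Candès for stable $\ell_1$ recovery under the restricted isometry property. Set $h := \hat\x - \x$ and let $T_0 := \mathrm{supp}(\x)$, so $|T_0|\le k$; the goal is to bound $\|h\|_2$ by a constant times $\varepsilon$. First I would record the two basic constraints on $h$. The \emph{tube constraint}: since $\x$ is feasible for \eqref{noisyrecoveryEqu} (because $\|A\x-\y\|_2=\|\e\|_2\le\varepsilon$) while $\hat\x$ is the minimizer and hence also feasible, the triangle inequality gives $\|Ah\|_2\le\|A\hat\x-\y\|_2+\|\y-A\x\|_2\le 2\varepsilon$. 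The \emph{cone constraint}: from $\|\hat\x\|_1\le\|\x\|_1$ and the fact that $\x$ vanishes off $T_0$, splitting $\|\x+h\|_1$ over $T_0$ and $T_0^c$ and applying the triangle inequality yields $\|h_{T_0^c}\|_1\le\|h_{T_0}\|_1$.

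Next I would set up the standard bucketing. Partition $T_0^c$ into index sets $T_1,T_2,\ldots$, each of size $k$ (the last possibly smaller), with $T_1$ collecting the indices of the $k$ largest-in-magnitude entries of $h$ restricted to $T_0^c$, $T_2$ the next $k$, and so on, and set $T_{01}:=T_0\cup T_1$. Because each entry of $h_{T_{j+1}}$ is at most the average magnitude of the entries of $h_{T_j}$, one gets $\|h_{T_{j+1}}\|_2\le k^{-1/2}\|h_{T_j}\|_1$; summing over $j\ge1$, invoking the cone constraint, and using Cauchy--Schwarz in the form $\|h_{T_0}\|_1\le k^{1/2}\|h_{T_0}\|_2$ gives
\[
\sum_{j\ge2}\|h_{T_j}\|_2\;\le\;k^{-1/2}\|h_{T_0^c}\|_1\;\le\;k^{-1/2}\|h_{T_0}\|_1\;\le\;\|h_{T_0}\|_2\;\le\;\|h_{T_{01}}\|_2 .
\]
Since $h_{(T_{01})^c}=\sum_{j\ge2}h_{T_j}$, this also yields $\|h_{(T_{01})^c}\|_2\le\|h_{T_{01}}\|_2$, whence $\|h\|_2\le\|h_{T_{01}}\|_2+\|h_{(T_{01})^c}\|_2\le 2\|h_{T_{01}}\|_2$. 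So it suffices to bound $\|h_{T_{01}}\|_2$.

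To do that I would expand $\|Ah_{T_{01}}\|_2^2=\langle Ah_{T_{01}},Ah\rangle-\sum_{j\ge2}\langle Ah_{T_{01}},Ah_{T_j}\rangle$. The left side is at least $(1-\delta_{2k})\|h_{T_{01}}\|_2^2$ by RIP$_{2k}$ (note $|T_{01}|\le 2k$), and $\|Ah_{T_{01}}\|_2\le\sqrt{1+\delta_{2k}}\,\|h_{T_{01}}\|_2$. The first inner product is at most $2\varepsilon\sqrt{1+\delta_{2k}}\,\|h_{T_{01}}\|_2$ by Cauchy--Schwarz together with the tube bound. For the cross terms I would use the near-orthogonality consequence of RIP: if $u,v$ have disjoint supports with $|\mathrm{supp}(u)|+|\mathrm{supp}(v)|\le 2k$, then $|\langle Au,Av\rangle|\le\delta_{2k}\|u\|_2\|v\|_2$. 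Applying this to the pairs $(h_{T_0},h_{T_j})$ and $(h_{T_1},h_{T_j})$ and using $\|h_{T_0}\|_2+\|h_{T_1}\|_2\le\sqrt2\,\|h_{T_{01}}\|_2$ gives
\[
\sum_{j\ge2}|\langle Ah_{T_{01}},Ah_{T_j}\rangle|\;\le\;\sqrt2\,\delta_{2k}\,\|h_{T_{01}}\|_2\sum_{j\ge2}\|h_{T_j}\|_2\;\le\;\sqrt2\,\delta_{2k}\,\|h_{T_{01}}\|_2^2 .
\]
Combining these estimates and cancelling one factor of $\|h_{T_{01}}\|_2$ (the case $\|h_{T_{01}}\|_2=0$ forces $h=0$ and is trivial) yields $\bigl(1-(1+\sqrt2)\delta_{2k}\bigr)\|h_{T_{01}}\|_2\le 2\sqrt{1+\delta_{2k}}\,\varepsilon$; the hypothesis $\delta_{2k}<\sqrt2-1$ makes the bracket positive, so $\|h_{T_{01}}\|_2\le\frac{2\sqrt{1+\delta_{2k}}}{1-(1+\sqrt2)\delta_{2k}}\varepsilon$, and finally $\|h\|_2\le 2\|h_{T_{01}}\|_2=C_2\varepsilon$. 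The one step that must be carried out carefully — and which I would isolate as a short sublemma — is the near-orthogonality estimate $|\langle Au,Av\rangle|\le\delta_{2k}\|u\|_2\|v\|_2$ for disjointly supported $u,v$ with total support at most $2k$, obtained by polarization from the RIP inequalities applied to $u\pm v/\|v\|_2$-type combinations; the rest of the argument is bookkeeping with the triangle inequality, Cauchy--Schwarz, and the bucketing bounds above.
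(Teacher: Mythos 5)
Your proposal is correct: it is precisely the standard Candès argument (tube and cone constraints, bucketing into size-$k$ blocks, the RIP near-orthogonality estimate for disjointly supported vectors, and the final cancellation), and it reproduces the exact constant $C_2=\frac{4\sqrt{1+\delta_{2k}}}{1-(1+\sqrt{2})\delta_{2k}}$ stated in the lemma. The paper itself offers no proof of this lemma---it simply imports the result from the cited references of Cand\`es, Romberg and Tao---so your argument fills in exactly the proof those references contain, with no substantive deviation.
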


\subsubsection{Compressed Sensing of Matrices - Serial Recovery (CSM-S) in the Presence of Noise}
The serial recovery method for compressed sensing of matrices in
the presence of noise is described by the following theorem.
\begin{theorem}[CSM-S in the presence of noise]
\label{compsensmatSN} Let $X=[x_{ij}]\in\R^{N_1\times N_2}$ be
$k$-sparse.  Let $U_i\in \R^{m_i\times N_i}$ and assume that $U_i$
satisfies the RIP$_{2k}$ property for some $\delta_{2k}\in (0,
\sqrt{2}-1)$, $i\in [2]$.  Define
\begin{equation}\label{defmatYE}
Y=[y_{pq}]=U_1 X U_2\trans +E, \quad Y \in \R^{m_1\times m_2},
\end{equation}
where $E$ denotes the noise matrix, and $\|E\|_F\le \epsilon$ for
some real nonnegative number $\epsilon$. Then $X$ can be recovered
approximately as follows. Let
$\c_1(Y),\ldots,\c_{m_2}(Y)\in\R^{m_1}$ denote the columns of $Y$.
Let $\hat \z_i\in\R^{N_1}$ be a solution of
\begin{equation}\label{defzstar}
\hat \z_i=\arg\min_{\z_i}\|\z_i\|_1 \quad \text{s.t.}\quad
\|\c_i(Y) - U_1\z_i\|_2\le \epsilon, \quad i\in [m_2].
\end{equation}
Let $Z\in\R^{N_1\times m_2}$ be the matrix whose columns are $\hat
\z_1,\ldots,\hat \z_{m_2}$. According to Eq.~\eqref{sigrecer1},
$\|\c_i(Z)-\c_i(XU_2\trans)\|_2=\|\hat
\z_i-\c_i(XU_2\trans)\|_2\le C_2\varepsilon$, hence
$\|Z-XU_2\trans\|_F\le \sqrt{m_2}C_2\varepsilon$. Let
$\c_1(Z\trans),\ldots, \c_{N_1}(Z\trans)$ be the rows of $Z$. Then
$\u_j\in\R^{N_2}$, the $j$-th row of $X$, is the solution of
\begin{equation}\label{defustar}
\hat \u_j=\arg\min_{\u_j}\|\u_j\|_1 \quad \text{s.t.}\quad
\|\c_j(Z\trans)-U_2 u_j\|_2 \le \sqrt{m_2}C_2\varepsilon, \quad
j\in [N_1].
\end{equation}
Denote by $\hat X$ the recovered matrix, then according to
Eq.~\eqref{sigrecer1},
\begin{equation}\label{MatrixBound}
\|\hat X- X\|_F\le \sqrt{m_2N_1}C_2^2\varepsilon.
\end{equation}
\end{theorem}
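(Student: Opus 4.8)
The plan is to invoke Lemma~\ref{sigrecer} separately along each of the two modes and to keep track of how the reconstruction error accumulates. Write $W:=XU_2\trans\in\R^{N_1\times m_2}$, so that $Y=U_1W+E$ by Eq.~\eqref{defmatYE}. The key structural remark is that every column of $W$ is a linear combination of the columns of $X$ (indeed $\c_i(W)=X\,\c_i(U_2\trans)$), so its support is contained in the union of the supports of the columns of $X$; since $X$ has at most $k$ nonzero entries, $\c_i(W)\in\Sigma_{k,N_1}$ for every $i$. Moreover $\c_i(Y)=U_1\c_i(W)+\c_i(E)$ with $\|\c_i(E)\|_2\le\|E\|_F\le\epsilon$, so $\c_i(W)$ is feasible for the program Eq.~\eqref{defzstar}. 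As $U_1$ satisfies RIP$_{2k}$ with $\delta_{2k}\in(0,\sqrt{2}-1)$, Lemma~\ref{sigrecer} applies to each of the $m_2$ column problems and gives $\|\hat\z_i-\c_i(W)\|_2\le C_2\epsilon$ for $i\in[m_2]$.

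Next I would aggregate the column bounds: since $Z$ has columns $\hat\z_i$ and $W$ has columns $\c_i(W)$, $\|Z-W\|_F^2=\sum_{i=1}^{m_2}\|\hat\z_i-\c_i(W)\|_2^2\le m_2C_2^2\epsilon^2$, i.e. $\|Z-XU_2\trans\|_F\le\sqrt{m_2}\,C_2\epsilon$, exactly the intermediate estimate quoted in the statement. Transposing, $W\trans=U_2X\trans$, so the $j$-th row of $Z$ viewed as a column vector satisfies $\c_j(Z\trans)=U_2\u_j+\c_j\big((Z-W)\trans\big)$, where $\u_j\in\R^{N_2}$ is the $j$-th row of $X$, hence $k$-sparse because $X$ is $k$-sparse, and $\|\c_j((Z-W)\trans)\|_2\le\|Z-W\|_F\le\sqrt{m_2}\,C_2\epsilon$. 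Thus $\u_j$ is feasible for Eq.~\eqref{defustar}, and applying Lemma~\ref{sigrecer} once more — now with measurement matrix $U_2$ and noise level $\sqrt{m_2}\,C_2\epsilon$ — yields $\|\hat\u_j-\u_j\|_2\le C_2\cdot\sqrt{m_2}\,C_2\epsilon=\sqrt{m_2}\,C_2^2\epsilon$ for $j\in[N_1]$.

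Finally I would reassemble $\hat X$ from its rows $\hat\u_j\trans$ and sum the squared row errors: $\|\hat X-X\|_F^2=\sum_{j=1}^{N_1}\|\hat\u_j-\u_j\|_2^2\le N_1m_2C_2^4\epsilon^2$, which is Eq.~\eqref{MatrixBound}. I do not expect any genuinely hard step here; the two points that need care are the sparsity bookkeeping — that the columns of $XU_2\trans$ and the rows of $X$ both lie in the appropriate $\Sigma_{k,\cdot}$ so that Lemma~\ref{sigrecer} is legitimately applicable — and the propagation of the error budget, namely that the Frobenius error from the first stage becomes the per-column $\ell_2$ noise level for the second stage, which is precisely what produces the compounded constant $C_2^2$ and the dimensional factor $\sqrt{m_2N_1}$. (A harmless notational point: the statement alternates between $\epsilon$ and $\varepsilon$; one should read the noise bound uniformly as $\|E\|_F\le\epsilon$ and propagate $C_2$ against it.)
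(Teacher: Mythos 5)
Your proposal is correct and follows exactly the route the paper intends: the paper's proof is the single line ``follows from Lemma~\ref{sigrecer},'' and your write-up supplies precisely the missing details (column-wise application of the lemma, aggregation to the Frobenius bound $\sqrt{m_2}\,C_2\varepsilon$, then a second application row-wise with that bound as the new noise level). The sparsity bookkeeping for the columns of $XU_2\trans$ and the rows of $X$, and the compounding to $\sqrt{m_2N_1}\,C_2^2\varepsilon$, all match the intermediate estimates already embedded in the theorem statement.
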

\begin{proof} The proof of the theorem follows from Lemma \ref{sigrecer}.
\qed
\end{proof}

The upper bound in Eq. \eqref{MatrixBound} can be tightened by
assuming that the entries of $E$ adhere to a specific type of
distribution. Let $E=[\e_1,\ldots,\e_{m_2}]$. Suppose that each
entry of $E$ is an independent random variable with a given
distribution having zero mean.  Then we can assume that
$\|\e_j\|_2\le\frac{\varepsilon}{\sqrt{m_2}}$, which implies that
$\|E\|_F\le \varepsilon$.

Each $\z_i$ can be recovered by finding a solution to
\begin{equation}\label{hatzirecov}
\hat \z_i=\arg\min_{\z_i}\|\z_i\|_1 \quad \text{s.t.}\quad
\|\c_i(Y) - U_1\z_i\|_2\le \frac{\varepsilon}{\sqrt{m_2}}, \quad
i\in [m_2].
\end{equation}

Let $Z=[\hat \z_1 \ldots \hat \z_{m_2}]\in\R^{N_1\times m_2}$.
According to Eq.~\eqref{sigrecer1},
$\|\c_i(Z)-\c_i(XU_2\trans)\|_2=\|\hat
\z_i-\c_i(XU_2\trans)\|_2\le C_2\frac{\varepsilon}{\sqrt{m_2}}$;
therefore $\|Z-XU_2\trans\|_F\le C_2\varepsilon$.

Let $E_1:= Z-XU_2\trans$ be the error matrix, and assume that the
entries of $E_1$ adhere to the same distribution as the entries of
$E$. Hence, $\|\c_i(Z\trans)-\c_i(U_2X\trans)\|_2\le
\frac{C_2\varepsilon}{\sqrt{N_1}}$.

$\hat X$ can be reconstructed by recovering each row of $X$:
\begin{equation}\label{hatbirecov}
\hat \u_j=\arg\min_{\u_j}\|\u_j\|_1 \quad \text{s.t.}\quad
\|\c_j(Z\trans)-U_2 \u_j\|_2 \le
\frac{C_2\varepsilon}{\sqrt{N_1}}, \quad j\in [N_1].
\end{equation}

Consequently, $\|\hat \u_j- \c_j(X\trans)\|_2\le
\frac{C_2^2\varepsilon}{\sqrt{N_1}}$, and the recovery error is
bounded as follows:
\begin{equation}\label{improvmaterestim}
\|\hat X- X\|_F\le C_2^2\varepsilon.
\end{equation}

When $Y$ is not full-rank, the above procedure is equivalent to
the following alternative. Let $Y_k$ be a best rank-$k$
approximation of $Y$:
\begin{equation}\label{SVDkaproxY}
Y_k=\sum_{i=1}^k (\sqrt{\tilde\sigma_i}\tilde \u_i)(\sqrt{\tilde
\sigma_i}\tilde \v_i)\trans.
\end{equation}

Here, $\tilde \sigma_i$ is the $i$-th singular value  of $Y$, and
$\tilde \u_i,\tilde \v_i$ are the corresponding left and right
singular vectors of $Y$ for $i\in [k]$, assume that $k\leq
\min(m_1, m_2)$. Since $X$ is assumed to be $k$-sparse, then
$\rank(X)\le k$. Hence the ranks of $XU_2$ and $U_1XU_2\trans$ are
less than or equal to $k$. In this case, recovering $X$ amounts to
following the procedure described above with $Y_k$ and $Z_k$
taking the place of $Y$ and $Z$ respectively.

\subsubsection{Compressed Sensing of Matrices - Parallelizable Recovery (CSM-P) in the Presence of Noise}
The parallelizable recovery method for compressed sensing of
matrices in the presence of noise is described by the following
theorem.
\begin{theorem}[CSM-P in the presence of noise]
\label{compsensmatPN} Let $X=[x_{ij}]\in\R^{N_1\times N_2}$ be
$k$-sparse.  Let $U_i\in \R^{m_i\times N_i}$ and assume that $U_i$
satisfies the RIP$_{2k}$ property for some $\delta_{2k}\in (0,
\sqrt{2}-1)$, $i\in [2]$.  Let $Y$ be as defined in
Eq.~\eqref{defmatYE}. Then $X$ can be recovered uniquely as
follows. Let $Y_{k'}$ be a best rank-$k'$ approximation of $Y$ as
in Eq.~\eqref{SVDkaproxY}, where $k'$ is the minimum of $k$ and
the number of singular values of $Y$ greater than
$\frac{\varepsilon}{\sqrt{k}}$. Then $\hat X = \sum_{i=1}^{k'}
\frac{1}{\tilde\sigma_i} \hat \x_i {\hat \y_i}\trans$ and
\begin{equation}\label{esterrMS}
\|X-\hat X\|_F\le C^2\varepsilon,
\end{equation}

where
\begin{eqnarray}\label{hatxrecov}
\hat \x_i=\arg\min_{\x_i} \|\x_i\|_1\quad \text{s.t.}\quad \|\tilde\sigma_i\tilde \u_i - U_1\x_i\|_2 \le \frac{\varepsilon}{\sqrt{2k}},&&\nonumber\\
\label{hatyrecov} \hat \y_i=\arg\min_{\y_i} \|\y_i\|_1\quad
\text{s.t.}\quad \|\tilde\sigma_i\tilde \v_i - U_2\y_i\|_2 \le
\frac{\varepsilon}{\sqrt{2k}},&&\\ i\in [k].&& \nonumber
\end{eqnarray}

\end{theorem}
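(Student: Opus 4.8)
The plan is to reproduce the architecture of Theorem~\ref{CSM-P}, replacing every exact identity there by an $\ell_2$--perturbation estimate supplied by Lemma~\ref{sigrecer}. Put $W:=U_1XU_2\trans$, so $Y=W+E$ with $\|E\|_F\le\varepsilon$. Since $X$ is $k$-sparse its nonzero entries occupy at most $k$ rows and at most $k$ columns, so $\rank(X)\le k$, every vector of $R(X)$ and of $R(X\trans)$ is $k$-sparse, and $\rank(W)\le k$ with $R(W)\subset U_1R(X)$, $R(W\trans)\subset U_2R(X\trans)$. The first step is to show that the rank-$k'$ truncation is $O(\varepsilon)$-accurate: $\|Y_{k'}-W\|_F\le\|Y_{k'}-Y\|_F+\|E\|_F$, and $\|Y_{k'}-Y\|_F^2=\sum_{i>k'}\tilde\sigma_i^2$ splits at $i=k$ into the indices $k'<i\le k$, which contribute at most $\varepsilon^2$ because each such $\tilde\sigma_i\le\varepsilon/\sqrt k$ by the choice of $k'$ and there are at most $k$ of them, and the indices $i>k$, which contribute at most $\|E\|_F^2\le\varepsilon^2$ because Weyl's inequality applied to $Y=W+E$ with $\rank(W)\le k$ gives $\tilde\sigma_{k+j}\le\sigma_j(E)$; hence $\|Y_{k'}-W\|_F\le(1+\sqrt2)\varepsilon$. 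The threshold $\varepsilon/\sqrt k$ in the definition of $k'$ is there precisely for this, and it also gives $\tilde\sigma_i^{-1}<\sqrt k/\varepsilon$ for $i\in[k']$.

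Next I would reduce each factor recovery to Lemma~\ref{sigrecer}. Writing $Y_{k'}=\sum_{i=1}^{k'}\tilde\sigma_i\tilde \u_i\tilde \v_i\trans$, set the ``true'' preimages $\x_i^\star:=XU_2\trans\tilde \v_i\in R(X)$ and $\y_i^\star:=X\trans U_1\trans\tilde \u_i\in R(X\trans)$, both $k$-sparse; then $\tilde\sigma_i\tilde \u_i=Y_{k'}\tilde \v_i=U_1\x_i^\star+(Y_{k'}-W)\tilde \v_i$ and $\tilde\sigma_i\tilde \v_i=Y_{k'}\trans\tilde \u_i=U_2\y_i^\star+(Y_{k'}-W)\trans\tilde \u_i$. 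By orthonormality of $\{\tilde \u_i\}$ and $\{\tilde \v_i\}$ the residual vectors on the right have squared $\ell_2$-norms summing to $O(\varepsilon^2)$; assuming, exactly as in the refinement following Theorem~\ref{compsensmatSN}, that the noise spreads evenly so that each residual has norm at most $\varepsilon/\sqrt{2k}$, the hypotheses of Lemma~\ref{sigrecer} hold ($U_1,U_2$ are RIP$_{2k}$ with $\delta_{2k}\in(0,\sqrt2-1)$ and $\x_i^\star,\y_i^\star$ are $k$-sparse), yielding $\|\hat \x_i-\x_i^\star\|_2\le C_2\varepsilon/\sqrt{2k}$ and $\|\hat \y_i-\y_i^\star\|_2\le C_2\varepsilon/\sqrt{2k}$ for $i\in[k']$, with $C_2$ as in Lemma~\ref{sigrecer}.

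The final step is to pass from these per-factor bounds to $\|X-\hat X\|_F$. I would compare both $\hat X$ and $X$ with the ``noiseless'' reconstruction $X^\star:=\sum_{i=1}^{k'}\tilde\sigma_i^{-1}\x_i^\star(\y_i^\star)\trans=XU_2\trans Y_{k'}^{+}U_1X$, where $Y_{k'}^{+}$ is the Moore--Penrose inverse, so that $\|X-\hat X\|_F\le\|X-X^\star\|_F+\|X^\star-\hat X\|_F$. Both $X$ and $X^\star$ are supported on a $k\times k$ submatrix, hence $M\mapsto U_1MU_2\trans$ is a two-sided near-isometry on that subspace by RIP$_{2k}$; since $U_1X^\star U_2\trans=WY_{k'}^{+}W$ and $\|Y_{k'}-W\|_F=O(\varepsilon)$, a pseudoinverse-perturbation estimate gives $\|U_1(X-X^\star)U_2\trans\|_F=O(\varepsilon)$ and hence $\|X-X^\star\|_F=O(\varepsilon)$. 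For the second term, write $U_1\hat \x_i=\tilde\sigma_i\tilde \u_i+r_i$, $U_2\hat \y_i=\tilde\sigma_i\tilde \v_i+s_i$ with $\|r_i\|_2,\|s_i\|_2\le\varepsilon/\sqrt{2k}$ (the feasibility constraints), and expand $\hat X-X^\star=\sum_i\tilde\sigma_i^{-1}\big((\hat \x_i-\x_i^\star)\hat \y_i\trans+\x_i^\star(\hat \y_i-\y_i^\star)\trans\big)$; each $\tilde\sigma_i^{-1}\|\hat \y_i\|_2$ and $\tilde\sigma_i^{-1}\|\x_i^\star\|_2$ is $O(1)$ (from RIP and $\tilde\sigma_i>\varepsilon/\sqrt k$), and using the orthonormality of $\{\tilde \u_i\},\{\tilde \v_i\}$ one arranges the $k'\le k$ rank-one errors to combine in root-sum-of-squares fashion, which absorbs the factor $k'$ and gives $\|\hat X-X^\star\|_F=O(C_2\varepsilon)$. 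Folding the numerical factors into a single constant $C$ (built from $C_2$ and $\delta_{2k}$) yields $\|X-\hat X\|_F\le C^2\varepsilon$, Eq.~\eqref{esterrMS}.

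I expect the last step to be the main obstacle. Unlike the serial scheme of Theorem~\ref{compsensmatSN}, where each stage passes a Frobenius bound directly to the next, here one must (i) recombine the $k'\le k$ per-factor errors without the triangle inequality costing a spurious $\sqrt k$, which forces one to exploit the (near-)orthonormality of the singular vectors and the RIP-conditioning of $U_1,U_2$, and (ii) transfer an error bound on $U_1(\cdot)U_2\trans$ back to one on the matrix itself, which is legitimate only because the comparison matrices stay supported on a $k\times k$ block and the restricted measurement map is RIP-invertible --- and here the pseudoinverse-perturbation step, in which $\tilde\sigma_{k'}^{-1}\le\sqrt k/\varepsilon$ amplifies the second-order term $\|Y_{k'}-W\|_F^2$, is where the constants are most delicate. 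As with Theorem~\ref{compsensmatSN}, the clean bound $C^2\varepsilon$ should be read as the leading-order estimate under the even-noise assumption (which is also what legitimizes the per-factor level $\varepsilon/\sqrt{2k}$); a fully rigorous version would have to carry some of the ``$O(1)$'' constants, or a spectral-gap quantity of $Y$, explicitly.
\qed
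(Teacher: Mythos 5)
Your proposal reaches the stated conclusion by a genuinely different route from the paper's, and one that is arguably on firmer footing. The paper works with the SVD $U_1XU_2\trans=\sum_i(\sqrt{\sigma_i}\u_i)(\sqrt{\sigma_i}\v_i)\trans$ of the \emph{noiseless} measurement and takes as recovery targets the preimages of the noiseless singular vectors; to connect these to the observed data it must quantify how $E$ perturbs singular pairs, which it does only heuristically --- it \emph{posits} that $\e_i=\sqrt{\tilde\sigma_i}\tilde\u_i-\sqrt{\sigma_i}\u_i$ and $\f_i$ have independent zero-mean Gaussian entries with a prescribed variance, discards second-order terms (``$\varepsilon^2\ll\varepsilon$''), assumes $\sigma_i\approx\tilde\sigma_i$, and thereby arrives at $\|\tilde\sigma_i\tilde\u_i-\sigma_i\u_i\|\le\varepsilon/\sqrt{2k}$ as input to Lemma~\ref{sigrecer}. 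You instead anchor everything to the \emph{noisy} singular vectors: your preimages $\x_i^\star=XU_2\trans\tilde\v_i$ are exactly $k$-sparse, satisfy $U_1\x_i^\star=U_1XU_2\trans\tilde\v_i$ identically, and the only residual is $(Y_{k'}-U_1XU_2\trans)\tilde\v_i$, which you control rigorously via Weyl's inequality together with the $\varepsilon/\sqrt{k}$ threshold that defines $k'$. This sidesteps singular-vector perturbation theory altogether --- a real gain, since singular subspaces are unstable when singular values cluster, a failure mode the paper's distributional ansatz silently ignores. The price is paid at the end: your target $X^\star=\sum_i\tilde\sigma_i^{-1}\x_i^\star(\y_i^\star)\trans$ is not exactly $X$, so you need the extra pseudoinverse-perturbation step with its delicate $\tilde\sigma_{k'}^{-1}\le\sqrt{k}/\varepsilon$ amplification, whereas the paper's rank-one targets reassemble exactly into $U_1XU_2\trans$. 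Note also that both arguments share the same two soft spots, which you correctly flag: the per-factor residual level $\varepsilon/\sqrt{2k}$ does not follow from the Frobenius bound alone (orthonormality only yields a root-mean-square of order $\varepsilon/\sqrt{k'}$ with $k'\le k$, so an even-spread assumption is needed in either version), and the final aggregation into $\|X-\hat X\|_F\le C^2\varepsilon$ is asserted without derivation in the paper and only outlined in your proposal. Your sketch of that last step is more explicit than anything in the printed proof, so this is not a defect relative to the paper, but neither argument is complete as written.
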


\begin{proof}  Assume that $k<\min(m_1,m_2)$, otherwise $Y_k=Y$.
Since $\rank(U_1XU_2)\leq k$, $Y_k=U_1XU_2+E_k$. Let
\begin{equation}\label{SVDdec}
U_1XU_2\trans=\sum_{i=1}^k (\sqrt{\sigma_i}
\u_i)(\sqrt{\sigma_i}\v_i)\trans
\end{equation}
be the SVD of $U_1XU_2\trans$. Then $\|\u_i\|=\|\tilde
\u_i\|=\|\v_i\|=\|\tilde \v_i\|=1$ for $i\in [k]$.

Assuming
\begin{equation}\label{mainhyp}
\e_i:=\sqrt{\tilde\sigma_i}\tilde \u_i-\sqrt{\sigma_i} \u_i, \quad
\f_i:=\sqrt{\tilde\sigma_i}\tilde \v_i-\sqrt{\sigma_i} \v_i, \quad
i\in [k],
\end{equation}
then the entries of $ \e_i$ and $\f_i$ are independent Gaussian
variables with zero mean and standard deviation
$\frac{\varepsilon}{\sqrt{2\sigma_im_1k}}$ and
$\frac{\varepsilon}{\sqrt{2\sigma_im_2k}}$, respectively, for
$i\in[k]$. When $\varepsilon^2\ll \varepsilon$,
\begin{equation}\label{Esapprox}
E_k\approx \sum_{i=1}^k \e_i(\sqrt{\sigma_i}\v_i\trans)+\sum_{i=1}^k
(\sqrt{\sigma_i} \u_i)\f_i\trans.
\end{equation}

In this scenario,
\begin{equation}\label{eruvas}
\|\sqrt{\sigma_i}\u_i-\sqrt{\tilde\sigma_i}\tilde \u_i\|\le
\frac{\varepsilon}{\sqrt{2k\sigma_i}}, \quad
\|\sqrt{\sigma_i}\v_i-\sqrt{\tilde\sigma_i}\tilde \v_i\|\le
\frac{\varepsilon}{\sqrt{2k\sigma_i}}.
\end{equation}

Note that
\begin{equation}\label{distsingest}
\sum_{i=1}^{\min(m_1,m_2)}(\sigma_i-\sigma(Y_k))^2\le \tr (E
E\trans)\le\varepsilon^2, \quad \sum_{i=1}^k
(\sigma_i-\tilde\sigma_i)^2\le \tr (E_k
E_k\trans)\le\varepsilon^2.
\end{equation}

Given the way $k'$ is defined, it can be interpreted as the
numerical rank of $Y$. Consequently, $Y$ can be well represented by its best rank $k'$
approximation.  Thus
\begin{equation}\label{SVDdec'}
U_1XU_2\trans\approx\sum_{i=1}^{k'}
(\sqrt{\sigma_i}\u_i)(\sqrt{\sigma_i}\v_i\trans), \quad
Y_{k'}=\sum_{i=1}^{k'} (\sqrt{\tilde\sigma_i}\tilde
\u_i)(\sqrt{\tilde \sigma_i}\tilde \v_i\trans),\quad i\in [k'].
\end{equation}
Assuming $\sigma_i\approx \tilde\sigma_i$ for $i\in [k']$, we
conclude that
\begin{equation}\label{esteruv}
\|\tilde\sigma_i\tilde \u_i-\sigma_i \u_i\|\le
\frac{\varepsilon}{\sqrt{2k}}, \quad \|\tilde\sigma_i\tilde
\v_i-\sigma_i \v_i\| \le \frac{\varepsilon}{\sqrt{2k}}.
\end{equation}
A compressed sensing framework can be used to solve the following
set of minimization problems, for $i\in [k']$:
\begin{eqnarray}\label{hatxrecov}
\hat \x_i=\arg\min_{\x_i} \|\x_i\|_1\quad \text{s.t.}\quad \|\tilde\sigma_i\tilde \u_i - U_1\x_i\|_2 \le \frac{\varepsilon}{\sqrt{2k}},\\
\label{hatyrecov} \hat \y_i=\arg\min_{\y_i} \|\y_i\|_1\quad
\text{s.t.}\quad \|\tilde\sigma_i\tilde \v_i - U_2\y_i\|_2 \le
\frac{\varepsilon}{\sqrt{2k}}.
\end{eqnarray}
The error bound from Eq.~\eqref{esterrMS} follows. \qed
\end{proof}

\subsubsection{Simulation Results}
In this section, we use the same target image and experimental
settings used in Section \ref{NoiselessImageSimu}. We simulate the
noisy recovery scenario by modifying the observation with
additive, zero-mean Gaussian noise having standard deviation
values ranging from 1 to 10 in steps of 1, and attempt to recover
the target image using Eq. \eqref{noisyrecoveryEqu}.  As before,
reconstruction performance is measured in terms of PSNR between
the recovered and the target image, and in terms of reconstruction
time, as illustrated in Figs. \ref{NoisyImagePSNR} and
\ref{NoisyImageTime}.

\begin{figure}[htb]
\begin{center}
\subfigure[GTCS-S]{\label{UICPSNR}\includegraphics[width=0.32\linewidth]{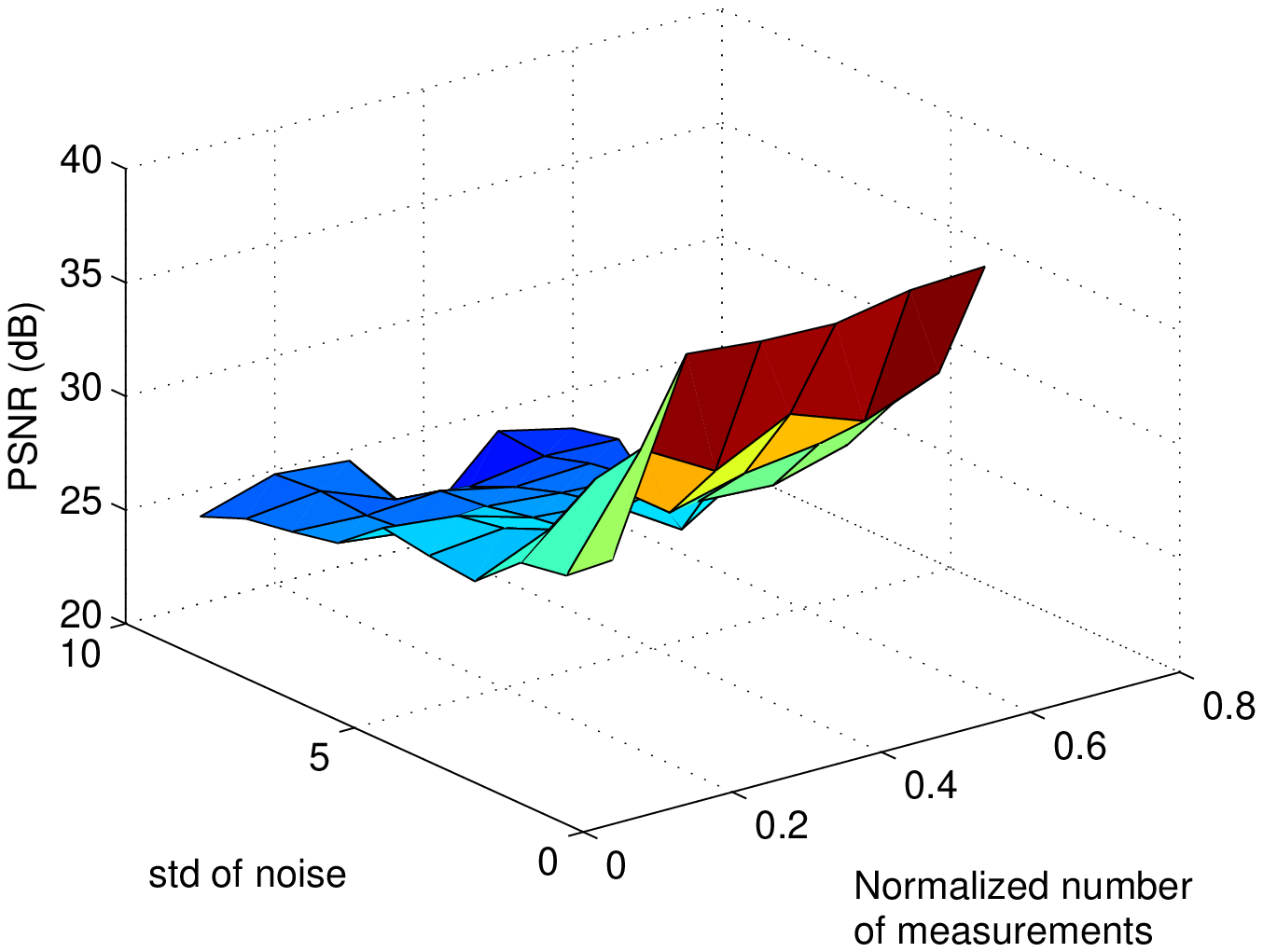}}
\subfigure[GTCS-P]{\label{UICTime}\includegraphics[width=0.32\linewidth]{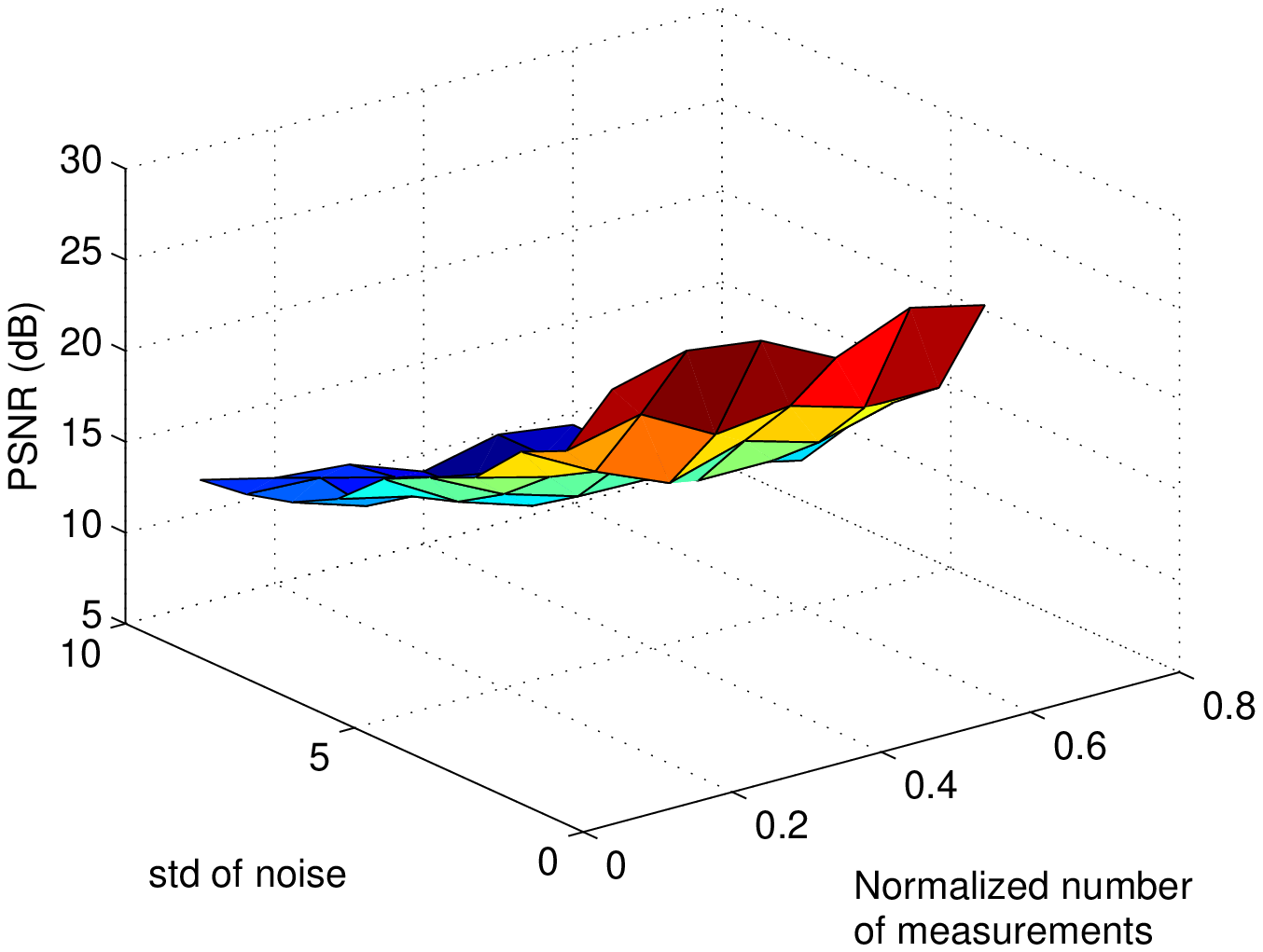}}
\subfigure[KCS]{\label{UICTime}\includegraphics[width=0.32\linewidth]{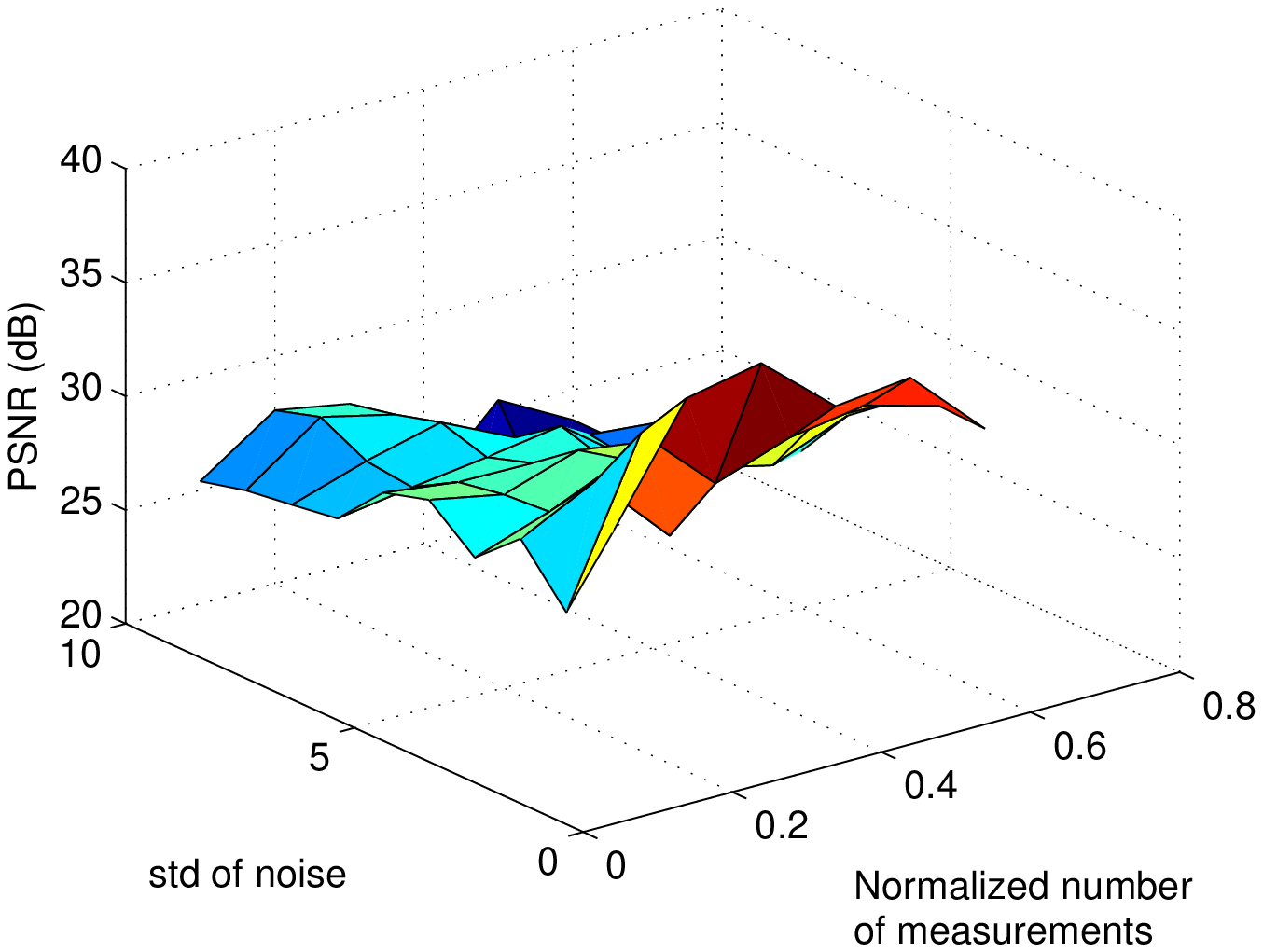}}
\caption{PSNR between target and recovered image for the tested methods in the noisy recovery scenario.} \label{NoisyImagePSNR}
\end{center}
\end{figure}

\begin{figure}[htb]
\begin{center}
\subfigure[GTCS-S]{\label{UICPSNR}\includegraphics[width=0.32\linewidth]{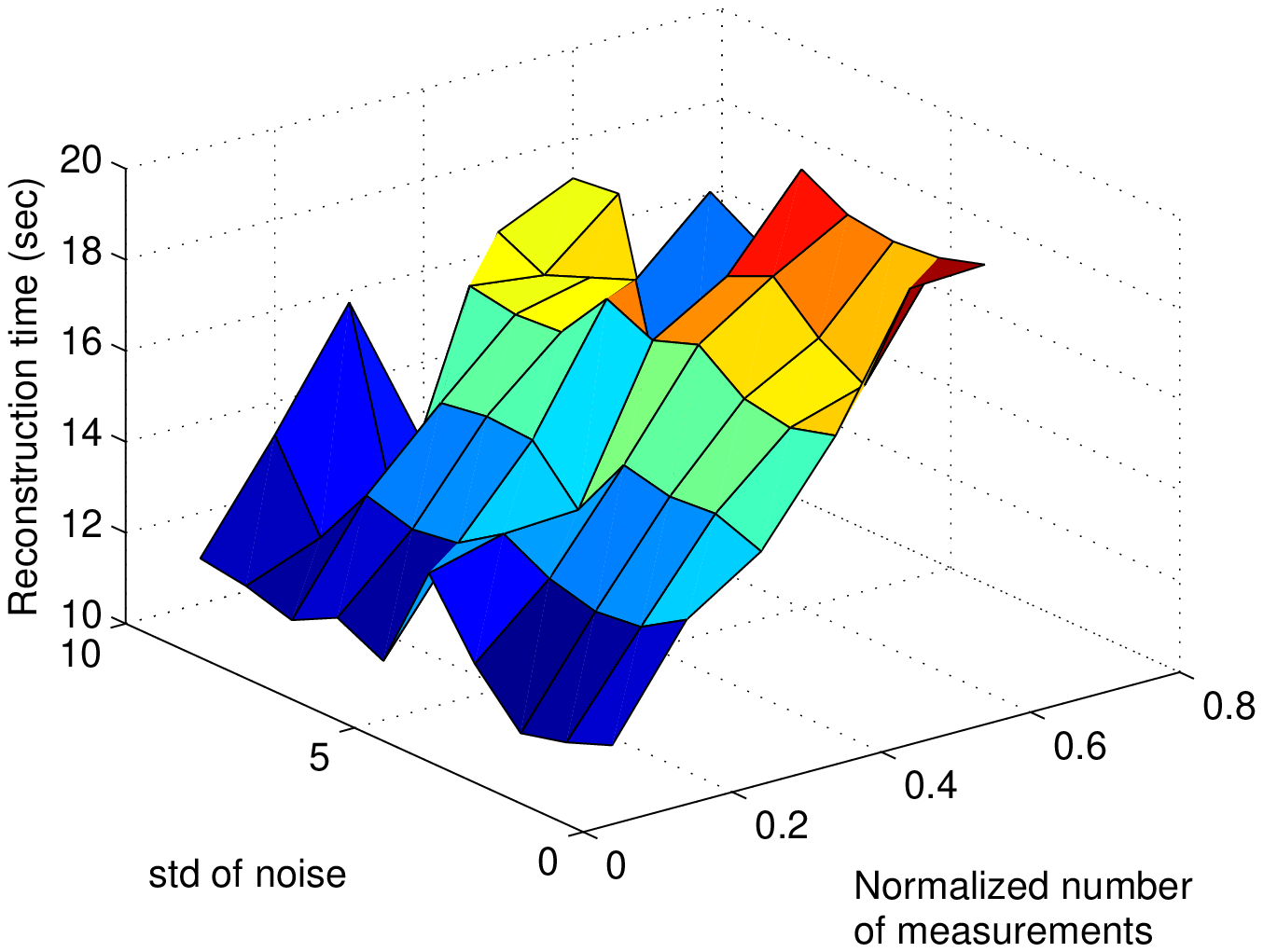}}
\subfigure[GTCS-P]{\label{UICTime}\includegraphics[width=0.32\linewidth]{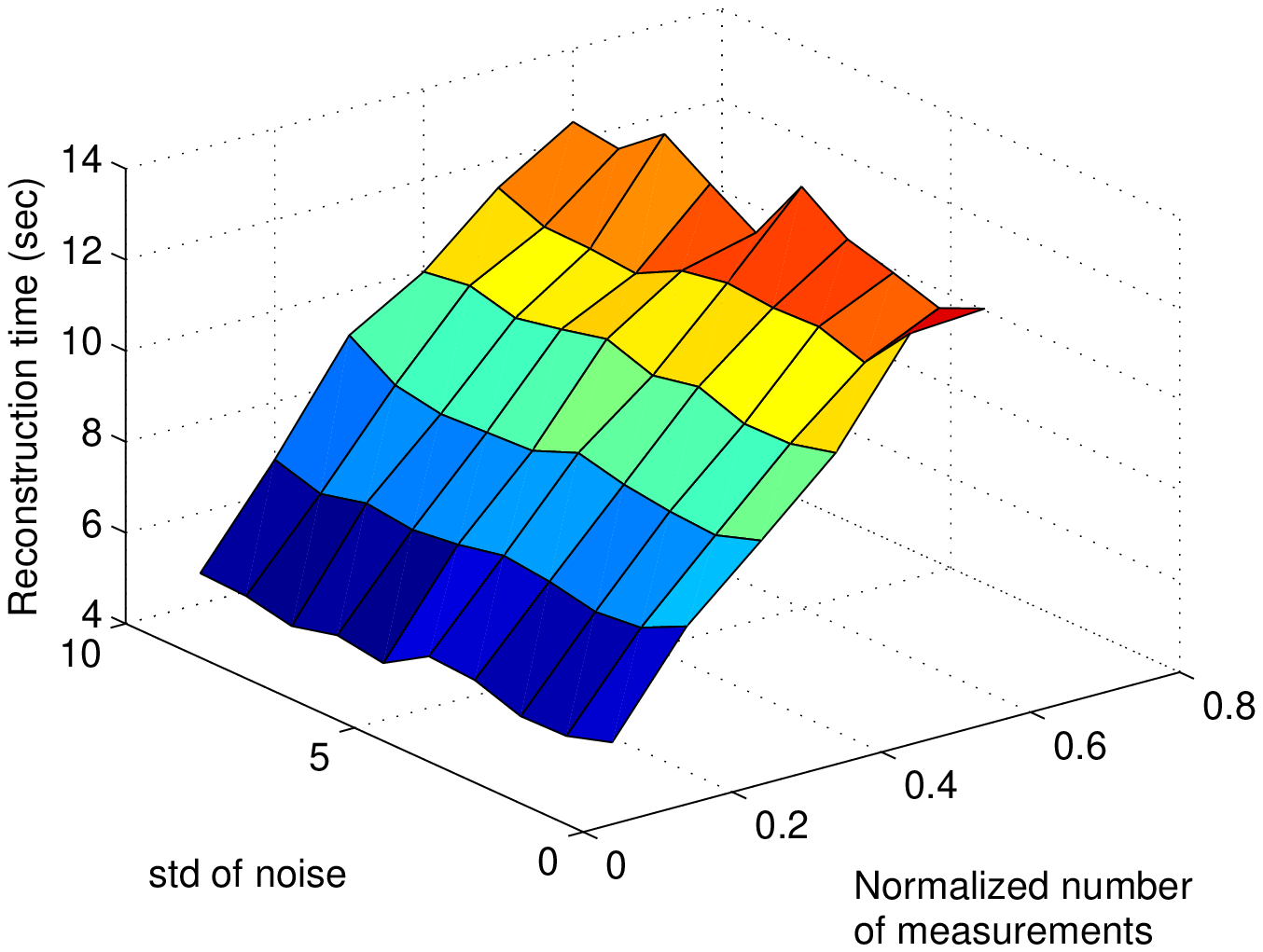}}
\subfigure[KCS]{\label{UICTime}\includegraphics[width=0.32\linewidth]{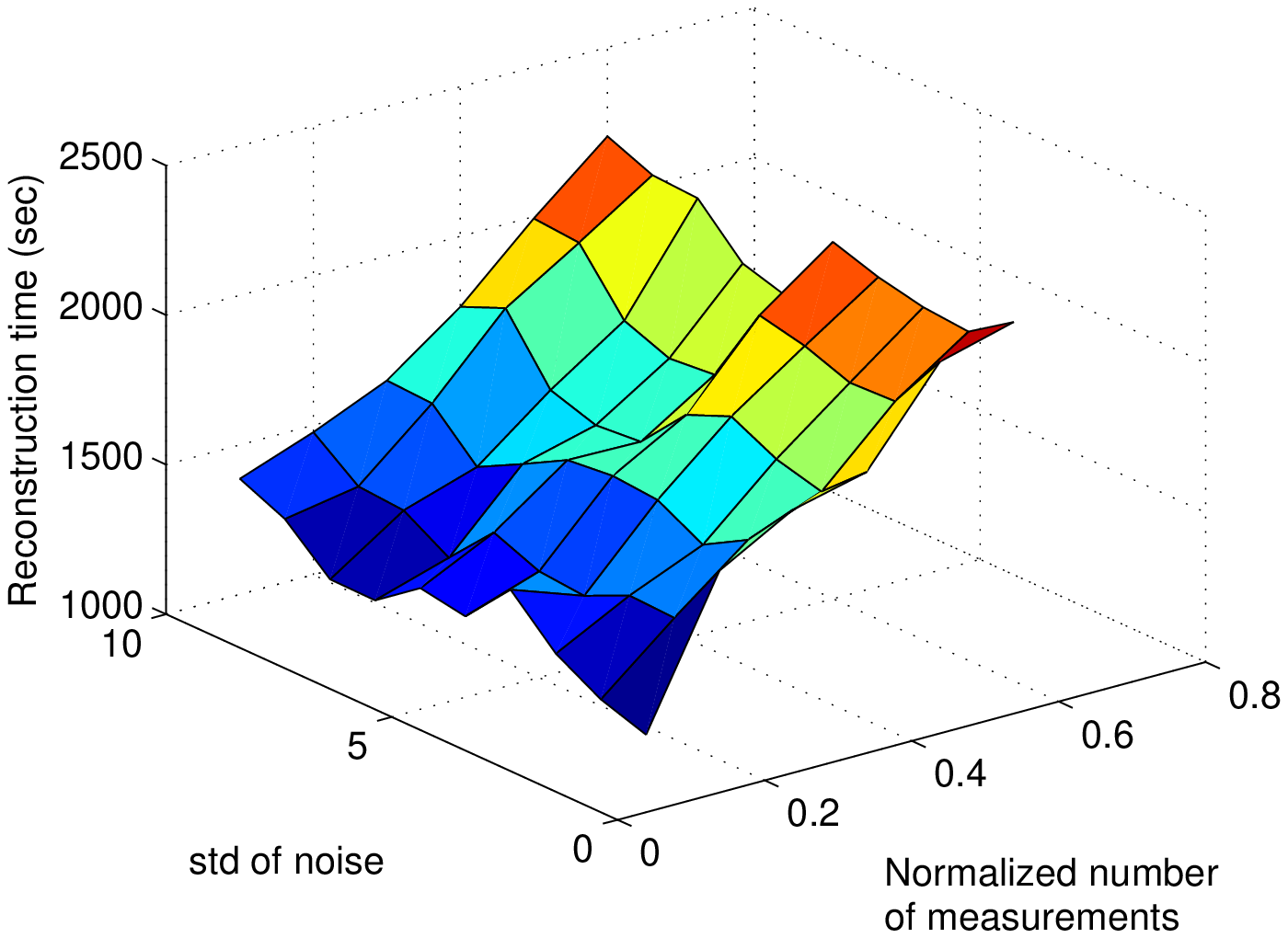}}
\caption{Execution time for the tested methods in the noisy recovery scenario.} \label{NoisyImageTime}
\end{center}
\end{figure}

\section{Compressed Sensing of Tensors}
\subsection{A Brief Introduction to Tensors}\label{multilinearAlgebra}
A tensor is a multidimensional array. The order of a tensor is the number of
modes. For instance, tensor $\mathcal{X}\in \mathbb{R}^{N_1\times \ldots\times
N_d}$ has order $d$ and the dimension of its $i^{th}$ mode (denoted mode
$i$) is $N_i$.

\begin{definition}
[Kronecker Product] {The Kronecker product between matrices $A\in
\mathbb{R}^{I\times J}$ and $B\in \mathbb{R}^{K\times L}$ is denoted by
$A\otimes B$. The result is the matrix of dimensions $(I\cdot K)\times (J\cdot L)$
defined by

$A\otimes B=\left(%
\begin{array}{cccc}
  a_{11}B & a_{12}B & \cdots & a_{1J}B \\
  a_{21}B & a_{22}B & \cdots & a_{2J}B \\
  \vdots & \vdots & \ddots & \vdots \\
  a_{I1}B & a_{I2}B & \cdots & a_{IJ}B \\
\end{array}%
\right)$}.
\end{definition}

\begin{definition}
[Outer Product and Tensor Product] {The operator $\circ$ denotes the tensor product between two vectors. In linear algebra, the outer product typically refers to the tensor product between two vectors, that is, $u\circ v=uv\trans$. In this chapter, the terms outer product and tensor product are equivalent. The Kronecker
product and the tensor product between two vectors are related by $u\circ v=u\otimes v\trans.$}
\end{definition}
\begin{definition}
[Mode-$i$ Product] {The mode-$i$ product of a tensor
$\mathcal{X}=[x_{\alpha_1,\ldots,\alpha_d}]\in \mathbb{R}^{N_1\times
\ldots\times N_d}$ and a matrix $U=[u_{j,\alpha_i}]\in \mathbb{R}^{J\times
N_i}$ is denoted by $\mathcal{X}\times _i U$ and is of size
$N_1\times\ldots\times N_{i-1}\times J \times N_{i+1}\times \ldots\times N_d$.
Element-wise, the mode-$i$ product can be written as $(\mathcal{X}\times _i
U)_{\alpha_1,\ldots,\alpha_{i-1},j,\alpha_{i+1},\ldots,\alpha_d} =
\sum_{\alpha_i=1}^{N_i}x_{\alpha_1,\ldots,\alpha_d}u_{j,\alpha_i}$.}
\end{definition}
\begin{definition}
[Mode-$i$ Fiber and Mode-$i$ Unfolding] {The mode-$i$ fiber of tensor
$\mathcal{X}=[x_{\alpha_1,\ldots,\alpha_d}]\in
\mathbb{R}^{N_1\times\ldots\times N_d}$ is the set of vectors obtained by fixing every index but $\alpha_i$. The mode-$i$ unfolding $X_{(i)}$ of $\mathcal{X}$ is the $N_i\times (N_1\cdot\ldots\cdot
N_{i-1}\cdot N_{i+1}\cdot\ldots\cdot N_d)$ matrix whose columns are the mode-$i$
fibers of $\mathcal{X}$. $\mathcal{Y} = \mathcal{X}\times_1 U_1\times \ldots\times_d U_d$ is equivalent
to $Y_{(i)} = U_iX_{(i)}(U_d\otimes \ldots\otimes U_{i+1}\otimes U_{i-1}\otimes
\ldots\otimes U_1)\trans$.}
\end{definition}

\begin{definition}
[Core Tucker Decomposition]\cite{Tuck1964}{ Let $\cX\in
\R^{N_1\times \ldots \times N_d}$ be a tensor with mode-$i$
unfolding $X_{(i)}\in \R^{N_i\times (N_1\cdot\ldots \cdot
N_{i-1}\cdot N_{i+1}\cdot\ldots \cdot N_d)}$ such that
$\rank(X_{(i)})=r_i$. Let $R_i(\cX)\subset\R^{N_i}$ denote the
column space of $X_{(i)}$, and $\c_{1,i},\ldots, \c_{r_i,i}$ be a
basis in $R_i(\cX)$. Then $\cX$ is an element of the subspace
$\V(\cX):=R_1(\cX)\circ\ldots\circ R_d(\cX) \subset \R^{N_1\times
\ldots \times N_d}$. Clearly, vectors $\c_{i_1,1}\circ\ldots\circ
\c_{i_d,d}$, where $i_j\in [r_j]$ and $j\in [d]$, form a basis of
$\V$.  The core Tucker decomposition of $\cX$ is
\begin{equation}\label{cortcukdec}
\cX=\sum_{i_j\in[r_j],j\in[d]} \xi_{i_1,\ldots,i_d}
\c_{i_1,1}\circ\ldots\circ \c_{i_d,d}
\end{equation}
for some decomposition coefficients $\xi_{i_1,\ldots,i_d}$,
$i_j\in [r_j]$ and $j\in [d]$.}\end{definition}

A special case of the core Tucker decomposition is the
higher-order singular value decomposition (HOSVD). Any tensor
$\cX\in \R^{N_1\times \ldots \times N_d}$ can be written as

\begin{equation}\label{HOSVD}
\cX=\cS\times_1 U_1\times\ldots\times_d U_d,
\end{equation}
where $U_i = [u_1\cdots u_{N_i}]$ is an orthonormal matrix for
$i\in [d]$, and $\mathcal{S} = \mathcal{X}\times_1
U_1\trans\times\ldots\times_d U_d\trans$ is called the core
tensor. For a more in-depth discussion on HOSVD, including the set
of properties the core tensor is required to satisfy, please refer
to \cite{HOSVD}.

$\cX$ can also be expressed in terms of weaker decompositions of
the form
 \begin{equation}\label{ranklikedec}
 \cX=\sum_{i=1}^K a_i^{(1)}\circ\ldots\circ a_i^{(d)}, \quad a_i^{(j)}\in R_j(\cX), j\in [d].
 \end{equation}
For instance, first decompose $X_{(1)}$ as
$X_{(1)}=\sum_{j=1}^{r_1} \c_{j,1} \g_{j,1}\trans$ (e.g.,\ via
SVD); then each $\g_{j,1}$ can be viewed as a tensor of order
$d-1$ $\in R_2(\cX)\circ \ldots\circ
R_d(\cX)\subset\mathbb{R}^{N_2\times \ldots\times N_d}$. Secondly,
unfold each $\g_{j,1}$ in mode $2$ to obtain ${\g_{j,1}}_{(2)}$
and decompose
${\g_{j,1}}_{(2)}=\sum_{l=1}^{r_2}d_{l,2,j}\f_{l,2,j}\trans, \quad
d_{l,2,j}\in R_2(\cX), f_{l,2,j}\in R_3(\cX)\circ\ldots\circ
R_d(\cX).$ By successively unfolding and decomposing each
remaining tensor mode, a decomposition of the form in
Eq.~\eqref{ranklikedec} is obtained.
 Note that if $\cX$ is $k$-sparse, then each vector in $R_i(\cX)$ is $k$-sparse and $r_i\leq k$ for $i\in [d]$. Hence, $K\le k^{d-1}$.
\begin{definition}
[CANDECOMP/PARAFAC
Decomposition]\cite{Kolda09tensordecompositions}{ For a tensor
$\mathcal{X}\in \mathbb{R}^{N_1\times \ldots\times N_d}$, the
CANDECOMP/PARAFAC (CP) decomposition is defined as
$\mathcal{X}\approx [\lambda;
A^{(1)},\ldots,A^{(d)}]\equiv\sum_{r=1}^R\lambda_r
a_r^{(1)}\circ\ldots\circ a_r^{(d)},$ where
$\lambda=[\lambda_1\ldots\lambda_R]\trans\in \mathbb{R}^R$ and
$A^{(i)}=[a_1^{(i)}\cdots a_R^{(i)}]\in \mathbb{R}^{N_i\times R}$
for $i\in [d].$}
\end{definition}

\subsection{Noiseless Recovery}
\subsubsection{Generalized Tensor Compressed Sensing - Serial Recovery (GTCS-S)} \label{sec:2}
The serial recovery method for compressed sensing of tensors in
the noiseless case is described by the following theorem.
\begin{theorem}\label{compsenstenrev}
Let $\cX=[x_{i_1,\ldots,i_d}]\in\R^{N_1\times \ldots \times N_d}$
be $k$-sparse. Let $U_i\in \R^{m_i\times N_i}$ and assume that
$U_i$ satisfies the NSP$_k$ property for $i\in [d]$. Define
\begin{equation}\label{deftenYtensor}
\cY=[y_{j_1,\ldots,j_d}]=\cX\times_1 U_1\times \ldots \times_d U_d
\in \R^{m_1\times\ldots\times m_d}.
\end{equation}
Then $\cX$ can be recovered uniquely as follows. Unfold $\cY$ in
mode $1$,
\begin{equation*}
Y_{(1)} = U_1X_{(1)}[\otimes_{k=d}^2 U_k]\trans\in\R^{m_1\times
(m_2\cdot\ldots\cdot m_d)}.
\end{equation*}
Let $y_1,\ldots,y_{m_2\cdot\ldots\cdot m_d}$ be the columns of
$Y_{(1)}$. Then $y_i=U_1z_i$, where each $z_i\in \R^{N_1}$ is
$k$-sparse. Recover each $z_i$ using Eq.~\eqref{l1minrec}. Let
$\cZ = \cX \times_2 U_2\times \ldots \times_d U_d \in
\R^{N_1\times m_2\times\ldots\times m_d}$, and let
$z_1,\ldots,z_{m_2\cdot\ldots\cdot m_d}$ denote its mode-$1$
fibers. Unfold $\cZ$ in mode 2,
\begin{equation*}
Z_{(2)} = U_2X_{(2)}[\otimes_{k=d}^3 U_k\otimes
I]\trans\in\R^{m_2\times (N_1\cdot m_3\cdot\ldots\cdot m_d)}.
\end{equation*}
Let $w_1,\ldots,w_{N_1\cdot m_3\cdot\ldots\cdot m_d}$ be the
columns of $Z_{(2)}$. Then $w_j=U_2v_j$, where each $v_j\in
\R^{N_2}$ is $k$-sparse. Recover each $v_j$ using
Eq.~\eqref{l1minrec}. $\cX$ can be reconstructed by successively
applying the above procedure to tensor modes $3,\ldots, d$.
\end{theorem}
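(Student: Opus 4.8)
The plan is to mimic the proof of the matrix case $d=2$ (Theorem~\ref{CSM-S}): peel off the measurement matrices $U_1,\ldots,U_d$ one mode at a time, each peeling step being a direct invocation of the fundamental lemma of noiseless recovery, Lemma~\ref{fundlemnrec}. The observation that makes every individual $\ell_1$ minimization legitimate is the one recorded just after Eq.~\eqref{ranklikedec}: since $\cX$ is $k$-sparse, the union of the supports of all its mode-$i$ fibers has at most $k$ elements, so every vector in the column space $R_i(\cX)$ of the mode-$i$ unfolding $X_{(i)}$ is $k$-sparse; that is, it lies in $\Sigma_{k,N_i}$.

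I would organize the recovery as an induction over $\ell=1,\ldots,d$ with the loop invariant that, before the $\ell$-th step, the tensor $\mathcal{W}_\ell:=\cX\times_\ell U_\ell\times_{\ell+1}U_{\ell+1}\times\ldots\times_d U_d$ is known; here $\mathcal{W}_1=\cY$ is the data, and after the $d$-th step one is left with $\mathcal{W}_{d+1}:=\cX$. In the $\ell$-th step I unfold $\mathcal{W}_\ell$ in mode $\ell$. By the unfolding identity stated in the definition of mode-$i$ unfolding, $(\mathcal{W}_\ell)_{(\ell)}=U_\ell M$, where $M=X_{(\ell)}\,[\,U_d\otimes\ldots\otimes U_{\ell+1}\otimes I\otimes\ldots\otimes I\,]\trans$ is precisely the mode-$\ell$ unfolding of $\cX\times_{\ell+1}U_{\ell+1}\times\ldots\times_d U_d=\mathcal{W}_{\ell+1}$ (with identity blocks occupying the slots of the already-processed modes $1,\ldots,\ell-1$). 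Each column of $M$ is a linear combination of the columns of $X_{(\ell)}$, hence lies in $R_\ell(\cX)$ and is therefore $k$-sparse. Since $U_\ell$ has the NSP$_k$ property, Lemma~\ref{fundlemnrec} guarantees that each column of $M$ is the unique minimizer of Eq.~\eqref{l1minrec} with the corresponding column of $(\mathcal{W}_\ell)_{(\ell)}$ as data; recovering all of them reconstructs $M$, hence $\mathcal{W}_{\ell+1}$, and the invariant is restored. For $\ell=1$ this is exactly the passage from $\cY$ to $\cZ$ described in the statement, for $\ell=2$ the passage from $\cZ$ to the mode-$3$ stage, and after $\ell=d$ we have recovered $\cX$; uniqueness holds at every step because each minimizer in Lemma~\ref{fundlemnrec} is unique.

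I expect no real mathematical obstacle here — the argument is nothing but iterated application of Lemma~\ref{fundlemnrec}. The only point needing care is the bookkeeping: getting the Kronecker-product structure of each successive unfolding identity right (in particular the identity blocks in the positions of the modes already processed), and checking that the vectors fed into each $\ell_1$ minimization genuinely belong to the relevant $R_\ell(\cX)$, so that both $k$-sparsity and uniqueness of the recovered fibers follow verbatim from the $d=2$ case of Theorem~\ref{CSM-S}.
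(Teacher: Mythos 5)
Your proposal is correct and follows essentially the same route as the paper, whose own proof of Theorem~\ref{compsenstenrev} consists only of the remark that it is a straightforward generalization of the matrix case (Theorem~\ref{compsensmatS}); your mode-by-mode induction, with each step justified by the $k$-sparsity of every vector in $R_\ell(\cX)$ together with Lemma~\ref{fundlemnrec}, is precisely that generalization written out in full.
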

\begin{proof}
The proof of this theorem is a straightforward generalization of
that of Theorem \ref{compsensmatS}.\qed
\end{proof}

Note that although Theorem \ref{compsenstenrev} requires $U_i$ to
satisfy the NSP$_k$ property for $i\in [d]$, such constraints can
be relaxed if each mode-$i$ fiber of $\mathcal{X}\times_{i+1}
U_{i+1}\times \ldots \times_d U_d$ is $k_i$-sparse for $i\in
[d-1]$, and each mode-$d$ fiber of $\mathcal{X}$ is $k_d$-sparse,
where $k_i\leq k$, for $i\in [d]$. In this case, it follows from
the proof of Theorem \ref{compsenstenrev} that $X$ can be
recovered as long as $U_i$ satisfies the NSP$_{k_i}$ property, for
$i\in [d]$.

\subsubsection{Generalized Tensor Compressed Sensing - Parallelizable Recovery
(GTCS-P)}\label{GTCSPNoiselessSection} The parallelizable recovery
method for compressed sensing of tensors in the noiseless case is
described by the following theorem.
\begin{theorem}[GTCS-P]\label{TheoremGTCSP}
Let $\cX=[x_{i_1,\ldots,i_d}]\in\R^{N_1\times \ldots \times N_d}$
be $k$-sparse. Let $U_i\in \R^{m_i\times N_i}$ and assume that
$U_i$ satisfies the NSP$_k$ property for $i \in [d]$. If $\cY$ is
given by Eq.~\eqref{deftenYtensor}, then $\cX$ can be recovered
uniquely as follows. Consider a decomposition of $\cY$ such that,
\begin{align}\label{ranklikedecY}
&\cY=\sum_{i=1}^K b_i^{(1)}\circ\ldots\circ b_i^{(d)}, \quad
b_i^{(j)}\in R_j(\cY)\subseteq U_jR_j(\cX), j\in [d].
\end{align}
Let $\hat w_i^{(j)}\in R_j(\cX)\subset \mathbb{R}^{N_j}$ be a
solution of
\begin{align}\label{defzstarP}
 &\hat w_i^{(j)}=\arg\min_{w_i^{(j)}}\|w_i^{(j)}\|_1\quad
\text{s.t.}\quad b_i^{(j)} = U_jw_i^{(j)}, \quad
 i\in [K], j\in [d].
\end{align}
Thus each $\hat w_i^{(j)}$ is unique and $k$-sparse. Then,
 \begin{equation}\label{ranklikedecYP}
 \cX=\sum_{i=1}^K w_i^{(1)}\circ\ldots\circ w_i^{(d)}, \quad w_i^{(j)}\in
 R_j(\cX), j\in [d].
 \end{equation}
\end{theorem}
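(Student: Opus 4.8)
The plan is to generalize the proof of Theorem~\ref{CSM-P} (the matrix case CSM-P) to the order-$d$ setting, using the multilinear structure of the mode-$i$ products. First I would record the two inclusions underlying the recovery step: since $\cY = \cX\times_1 U_1\times\cdots\times_d U_d$, unfolding in mode $j$ gives $Y_{(j)} = U_j X_{(j)}(\otimes_{\ell\ne j}U_\ell)\trans$, so the column space $R_j(\cY)$ satisfies $R_j(\cY)\subseteq U_j R_j(\cX)$. Because $\cX$ is $k$-sparse, every vector in $R_j(\cX)$ is $k$-sparse (its entries are supported on the projection of $\mathrm{supp}(\cX)$ onto mode $j$), and $U_j$ satisfies NSP$_k$. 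Hence for each $i\in[K]$ and $j\in[d]$ the vector $b_i^{(j)}\in R_j(\cY)\subseteq U_jR_j(\cX)$ has a preimage in $R_j(\cX)$ under $U_j$, and by Lemma~\ref{fundlemnrec} this preimage is the unique $\ell_1$-minimizer $\hat w_i^{(j)}$, which is $k$-sparse and lies in $R_j(\cX)$. This establishes uniqueness and sparsity of the $\hat w_i^{(j)}$.

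Next I would set $\hat\cX := \sum_{i=1}^K \hat w_i^{(1)}\circ\cdots\circ\hat w_i^{(d)}$ and argue $\hat\cX = \cX$ by contradiction, mimicking the matrix proof. Suppose $\cX - \hat\cX \ne 0$. Since each $\hat w_i^{(j)}\in R_j(\cX)$, the mode-$j$ unfolding of $\hat\cX$ has columns in $R_j(\cX)$, so $R_j(\cX - \hat\cX)\subseteq R_j(\cX)$ for all $j$. Take any decomposition of the form Eq.~\eqref{ranklikedec} of $\cX-\hat\cX$, say $\cX-\hat\cX = \sum_{i=1}^{J} u_i^{(1)}\circ\cdots\circ u_i^{(d)}$ with $u_i^{(j)}\in R_j(\cX-\hat\cX)\subseteq R_j(\cX)$. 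The key is to choose this decomposition so that the first-mode vectors $u_1^{(1)},\ldots,u_J^{(1)}$ are linearly independent: this is exactly what one obtains by decomposing the mode-$1$ unfolding $(\cX-\hat\cX)_{(1)}$ into a rank decomposition (its column rank is $J$), viewing each resulting mode-$(d-1)$ slice as in the discussion preceding Eq.~\eqref{ranklikedec}. Then, because every vector in $R_1(\cX)$ is $k$-sparse and $U_1$ has NSP$_k$, the Appendix lemma already invoked in the proof of Theorem~\ref{CSM-P} gives that $U_1u_1^{(1)},\ldots,U_1u_J^{(1)}$ remain linearly independent.

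Now apply $\times_1 U_1\times\cdots\times_d U_d$ to $\cX-\hat\cX$. On one hand this equals $\cY - \sum_{i=1}^K (U_1\hat w_i^{(1)})\circ\cdots\circ(U_d\hat w_i^{(d)}) = \cY - \sum_{i=1}^K b_i^{(1)}\circ\cdots\circ b_i^{(d)} = \cY - \cY = 0$ by Eq.~\eqref{ranklikedecY} and Eq.~\eqref{defzstarP}. On the other hand it equals $\cW := \sum_{i=1}^J (U_1u_i^{(1)})\circ\cdots\circ(U_du_i^{(d)})$. Unfolding $\cW$ in mode $1$ gives $W_{(1)} = \sum_{i=1}^J (U_1u_i^{(1)})\,(\text{vec of }U_2u_i^{(2)}\circ\cdots\circ U_du_i^{(d)})\trans$; since the $U_1u_i^{(1)}$ are linearly independent and the tensor factors $u_i^{(2)}\circ\cdots\circ u_i^{(d)}$ are nonzero (as $u_i^{(j)}\ne 0$), the rank of $W_{(1)}$ is at least... here one must be slightly careful, but it suffices to observe $W_{(1)}\ne 0$: a nontrivial linear combination of the independent vectors $U_1u_i^{(1)}$ with nonzero coefficient tensors cannot vanish. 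Thus $\cW\ne 0$, contradicting $\cW = 0$. Therefore $\cX - \hat\cX = 0$, i.e., Eq.~\eqref{ranklikedecYP} holds, and since the decomposition Eq.~\eqref{ranklikedecY} of $\cY$ was arbitrary the recovered $\cX$ is unique.

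The main obstacle is the linear-independence/nonvanishing step in the higher-order setting: in the matrix case one directly gets $\mathrm{rank}(Z) = J$ from the independence of both left and right factor families, whereas for $d\ge 3$ the CP-type rank of a sum of rank-one terms is not simply controlled. The right fix is to reduce to the matrix argument by a single unfolding — use that the $u_i^{(1)}$ can be chosen independent, so that after applying $U_1$ the independence survives by the Appendix lemma, and then the mode-$1$ unfolding of $\cW$ is a sum of rank-one matrices with a linearly independent set of left factors and nonzero right factors, which forces $W_{(1)}\ne 0$. Handling this cleanly, and justifying that one may always select the decomposition of $\cX-\hat\cX$ with independent mode-$1$ factors, is the only genuinely nontrivial point; everything else is bookkeeping with mode products and the already-established NSP consequences.
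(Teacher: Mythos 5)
Your opening step (uniqueness and $k$-sparsity of the $\hat w_i^{(j)}$ via $R_j(\cY)\subseteq U_jR_j(\cX)$, $k$-sparsity of every vector in $R_j(\cX)$, and NSP$_k$ of $U_j$) matches the paper, and you correctly locate the difficulty in the linear-independence/nonvanishing step. But your proposed fix for that step has a genuine gap. For $d\ge 3$ there is in general no decomposition $\cX-\hat\cX=\sum_{i=1}^{J}u_i^{(1)}\circ\cdots\circ u_i^{(d)}$ in which the mode-$1$ factors $u_1^{(1)},\ldots,u_J^{(1)}$ are linearly independent \emph{and} every term is a genuine $d$-way outer product of vectors. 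The rank decomposition of the unfolding, $(\cX-\hat\cX)_{(1)}=\sum_{j=1}^{J}c_jg_j\trans$, gives linearly independent left factors $c_j$ but general order-$(d-1)$ tensors $g_j$ as right factors; further decomposing each $g_j$ as in the construction preceding Eq.~\eqref{ranklikedec} produces several rank-one terms that all share the same mode-$1$ factor $c_j$, which destroys the independence you need. Consequently, once you deduce from the independence of the $U_1c_j$ (Appendix lemma) that $(U_d\otimes\cdots\otimes U_2)g_j=0$ for every $j$, you cannot reach a contradiction the way you propose: the $g_j$ are not rank-one, they are only $k^{d-1}$-sparse rather than $k$-sparse, and nothing you have established says that the Kronecker product $U_d\otimes\cdots\otimes U_2$ is injective on such vectors --- avoiding reliance on properties of that large Kronecker matrix is precisely the point of the method.

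The paper's proof turns this apparent dead end into an induction on the mode: from $\sum_j(U_mc_j)\bigl((U_d\otimes\cdots\otimes U_{m+1}\otimes I)g_j\bigr)\trans=0$ and the independence of the $U_mc_j$ it concludes only that $(X_{(m)}-Z_{(m)})(U_d\otimes\cdots\otimes U_{m+1}\otimes I)\trans=0$, i.e., that $U_m$ may be replaced by the identity in $(\cX-\cZ)\times_mU_m\times\cdots\times_dU_d=0$; repeating this for $m=1,\ldots,d$ strips away all of the measurement matrices and yields $\cX-\cZ=0$, the desired contradiction. If you insist on a one-shot argument after a single unfolding, you would instead have to prove that $U_d\otimes\cdots\otimes U_2$ is injective on vectors supported in $S_2\times\cdots\times S_d$, where $S_j$ is the mode-$j$ projection of the support of $\cX$ (this is true, since each restriction $U_j|_{S_j}$ is injective by NSP$_k$, but it is an additional Kronecker-product argument you neither state nor prove). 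As written, the proposal does not close.
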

\begin{proof}
Since $\cX$ is $k$-sparse, each vector in $R_j(\cX)$ is
$k$-sparse. If each $U_j$ satisfies the NSP$_k$ property, then
$w_i^{(j)}\in R_j(\cX)$ is unique and $k$-sparse. Define $\cZ$ as
\begin{equation}\label{ranklikedecZ}
\cZ=\sum_{i=1}^K w_i^{(1)}\circ\ldots\circ w_i^{(d)}, \quad
w_i^{(j)}\in R_j(\cX), j\in [d].
\end{equation}
Then
\begin{equation}\label{inductioninitial}
(\cX-\cZ)\times_1 U_1\times\ldots \times_d U_d=0.
\end{equation}

To show $\cZ=\cX$, assume a slightly more general scenario, where
each $R_j(\cX)\subseteq \V_j \subset\R^{N_j}$, such that each
nonzero vector in $\V_j$ is $k$-sparse. Then $R_j(\cY)\subseteq
U_jR_j(\cX)\subseteq U_j\V_j$ for $j\in[d]$. Assume to the
contrary that $\cX\neq\cZ$. This hypothesis can be disproven via
induction on mode $m$ as follows.

Suppose
\begin{equation}\label{inductionm}
(\cX-\cZ)\times_m U_m\times\ldots\times_d U_d=0.
\end{equation}

Unfold $\cX$ and $\cZ$ in mode $m$, then the column (row) spaces
of $X_{(m)}$ and $Z_{(m)}$ are contained in $\V_m$
($\hat\V_m:=\V_1\circ\ldots\circ
\V_{m-1}\circ\V_{m+1}\circ\ldots\circ \V_d$). Since $\cX\neq\cZ$,
$X_{(m)}-Z_{(m)}\ne 0$. Then $X_{(m)}-Z_{(m)}=\sum_{i=1}^p u_
iv_i\trans$, where $\rank (X_{(m)}-Z_{(m)})=p$, and
$u_1,\ldots,u_p\in\V_m, v_1,\ldots,v_p\in \hat \V_m$ are two sets
of linearly independent vectors.

Since $(\cX-\cZ)\times_m U_m\times\ldots\times_d U_d=0$,
\begin{align*}
0&=U_m(X_{(m)}-Z_{(m)})(U_d\otimes\ldots\otimes U_{m+1}\otimes I)\trans\\
&=U_m(X_{(m)}-Z_{(m)})\hat U_m\trans\\
&=\sum_{i=1}^p (U_mu_i)(\hat U_mv_i)\trans.
\end{align*}

Since $U_mu_1,\ldots,U_mu_p$ are linearly independent (see
Appendix for proof), it follows that $\hat U_mv_i=0$ for $i\in
[p]$. Therefore,
\[(X_{(m)}-Z_{(m)})\hat U_m\trans=(\sum_{i=1}^p u_iv_i\trans)\hat U_m\trans = \sum_{i=1}^p u_i(\hat U_mv_i)\trans =
0,\] which is equivalent to (in tensor form, after folding)
\begin{align}\label{inductionm1}
&(\cX-\cZ)\times_m I_m\times_{m+1} U_{m+1}\times\ldots
\times_d U_d\nonumber\\
&=(\cX-\cZ)\times_{m+1} U_{m+1}\times\ldots \times_d U_d=0,
\end{align}
where $I_m$ is the $N_m\times N_m$ identity matrix. Note that Eq.
\eqref{inductionm} leads to Eq. \eqref{inductionm1} upon replacing
$U_m$ with $I_m$. Similarly, when $m=1$, $U_1$ can be replaced
with $I_1$ in Eq.~\eqref{inductioninitial}. By successively
replacing $U_m$ with $I_m$ for $2\leq m\leq d$,
 \begin{align*}
&(\cX-\cZ)\times_1 U_1\times\ldots \times_d U_d\\
=&(\cX-\cZ)\times_1 I_1\times\ldots \times_d I_d\\
=&\cX-\cZ=0, \label{eqstepdm1}
\end{align*}
which contradicts the assumption that $\cX\neq\cZ$. Thus,
$\cX=\cZ$. This completes the proof. \qed
\end{proof}

Note that although Theorem \ref{TheoremGTCSP} requires $U_i$ to
satisfy the NSP$_k$ property for $i\in [d]$, such constraints can
be relaxed if all vectors $\in R_i(\cX)$ are $k_i$-sparse. In this
case, it follows from the proof of Theorem \ref{TheoremGTCSP} that
$X$ can be recovered as long as $U_i$ satisfies the NSP$_{k_i}$,
for $i\in [d]$.

As in the matrix case, the reconstruction stage of the recovery
process can be implemented in parallel for each tensor mode.

Note additionally that Theorem \ref{TheoremGTCSP} does not require
tensor rank decomposition, which is an NP-hard problem. Weaker
decompositions such as the one described by Eq. \ref{ranklikedec}
can be utilized.

The above described procedure allows exact recovery. In some
cases, recovery of a rank-$R$ approximation of $\cX$,
$\mathcal{\hat{X}}=\sum_{r=1}^Rw_r^{(1)}\circ \ldots\circ
w_r^{(d)}$, suffices. In such scenarios, $\cY$ in Eq.
\eqref{ranklikedecY} can be replaced by its rank-$R$
approximation, namely, $\cY=\sum_{r=1}^Rb_r^{(1)}\circ \ldots\circ
b_r^{(d)}$ (obtained e.g., \ by CP decomposition).

\subsubsection{Simulation Results}\label{noisySimuTensor}
Examples of data that is amendable to tensorial representation
include color and multi-spectral images and video. We use a
24-frame, $24\times 24$ pixel grayscale video to test the
performance of our algorithm (see Fig. \ref{OriginalVideo}). In
other words, the video data is represented as a $24\times 24\times
24$ tensor ($N=13824$). We use the three-dimensional DCT as the
sparsifying transform, and zero-out coefficients outside the
$6\times 6\times 6$ cube located on the front upper left corner of
the transformed tensor. As in the image case, let $m$ denote the
number of measurements along each tensor mode; we generate the
measurement matrices with entries drawn from a Gaussian
distribution with mean $0$ and standard deviation
$\sqrt{\frac{1}{m}}$. For simplicity, we set the number of
measurements for each tensor mode to be equal; that is, the
randomly constructed Gaussian matrix $U$ is of size $m\times 24$
for each mode. Therefore, the KCS measurement matrix $U\otimes
U\otimes U$ is of size $m^3\times 13824$, and the total number of
measurements is $m^3$. We refer to $\frac{m^3}{N}$ as the
normalized number of measurements. For GTCS-P, we employ the
weaker form of the core Tucker decomposition as described in
Section \ref{multilinearAlgebra}. Although the reconstruction
stage of GTCS-P is parallelizable, we recover each vector in
series. We examine the performance of KCS and GTCS-P by varying
the normalized number of measurements from 0.1 to 0.6 in steps of
0.1. Reconstruction accuracy is measured in terms of the average
PSNR across all frames between the recovered and the target video,
whereas computational complexity is measured in terms of the
$\log$ of the reconstruction time (see Fig. \ref{NoiselessVideo}).
\begin{figure}[htb]
\begin{center}
\includegraphics[width=0.49\linewidth]{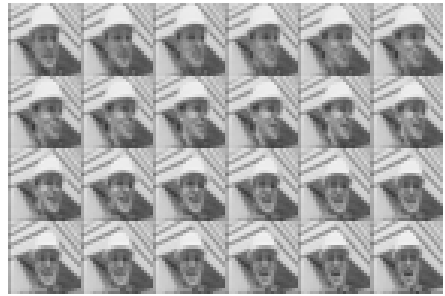}
\caption{The original 24 video frames.} \label{OriginalVideo}
\end{center}
\end{figure}

Note that in the tensor case, due to the serial nature of GTCS-S,
the reconstruction error propagates through the different stages
of the recovery process. Since exact reconstruction is rarely
achieved in practice, the equality constraint in the
$\ell_1$-minimization process described by Eq. \eqref{l1minrec}
becomes increasingly difficult to satisfy for the latter stages of
the reconstruction process. In this case, a relaxed recovery
procedure as described in Eq. \eqref{noisyrecoveryEqu} can be
employed. Since the relaxed constraint from Eq.
\eqref{noisyrecoveryEqu} results in what effectively amounts to
recovery in the presence of noise, we do not compare the
performance of GTCS-S with that of the other two methods.

\begin{figure}[htb]
\begin{center}
\subfigure[PSNR
comparison]{\label{SusiePSNR}\includegraphics[width=0.49\linewidth]{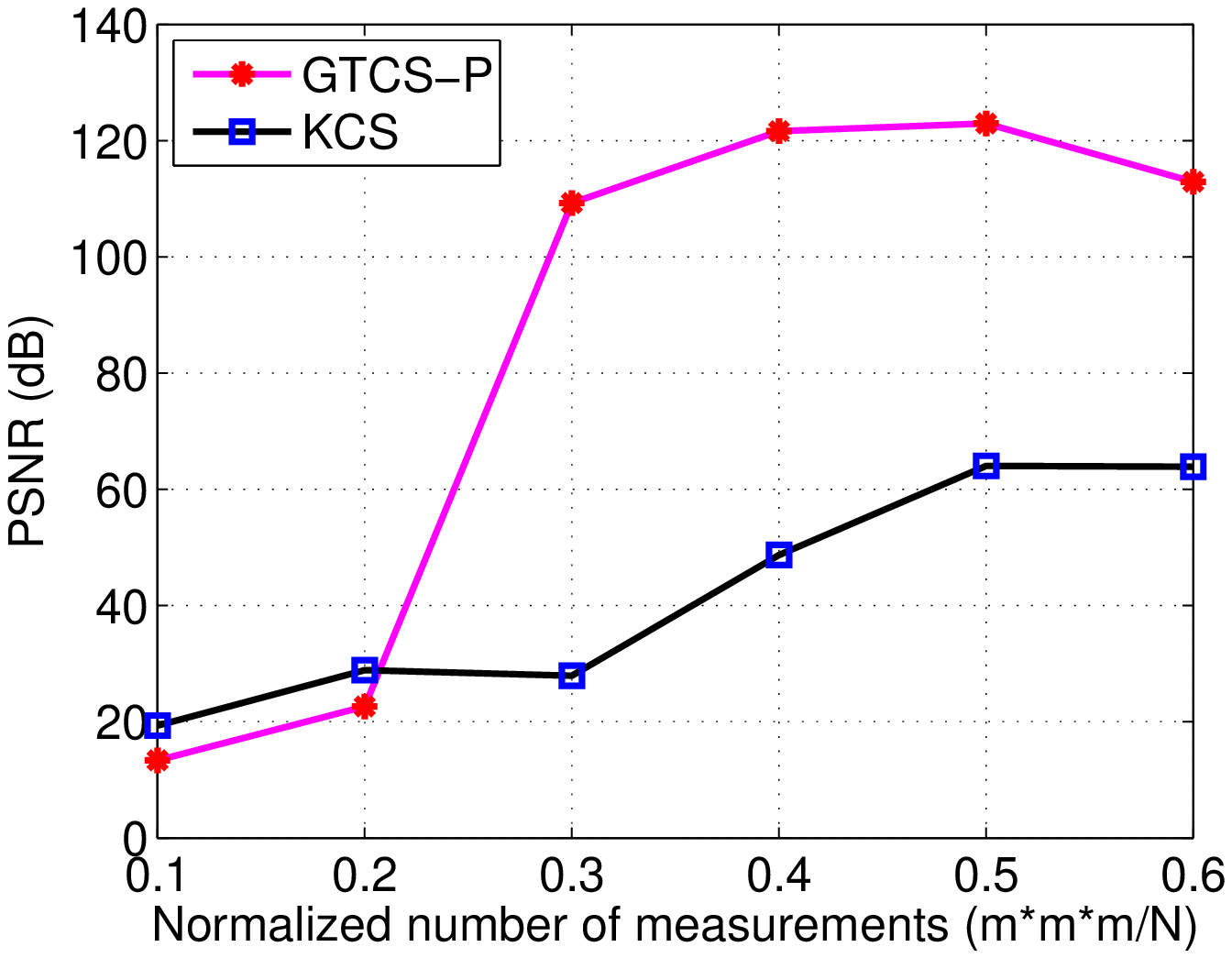}}
\subfigure[Recovery time
comparison]{\label{SusieTime}\includegraphics[width=0.49\linewidth]{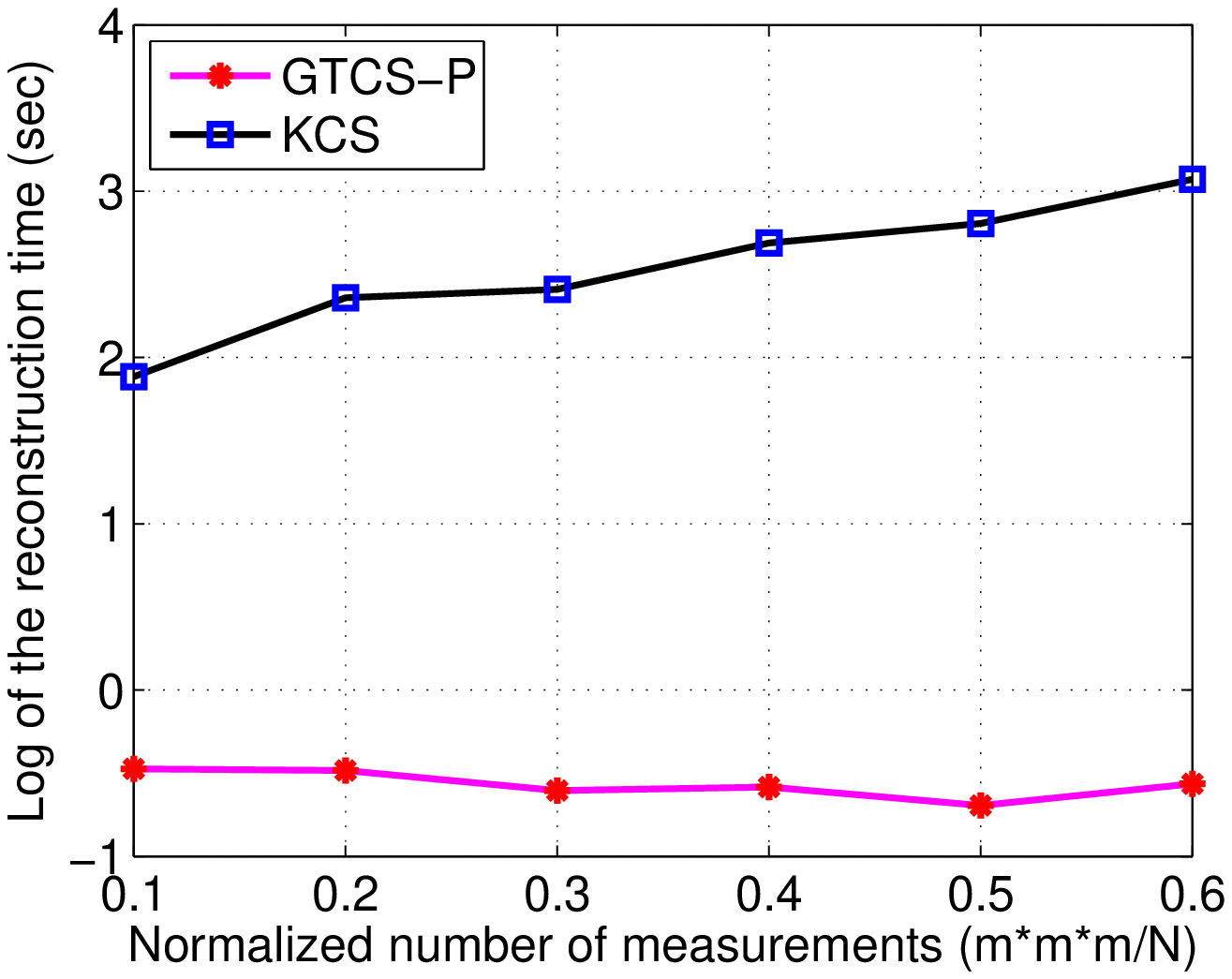}}
\caption{Performance comparison among the tested methods in terms
of PSNR and reconstruction time in the scenario of noiseless
recovery of the sparse video.} \label{NoiselessVideo}
\end{center}
\end{figure}

\subsection{Recovery in the Presence of Noise}
\subsubsection{Generalized Tensor Compressed Sensing - Serial Recovery (GTCS-S) in the Presence of Noise}
Let $\cX=[x_{i_1,\ldots,i_d}]\in\R^{N_1\times \ldots \times N_d}$
be $k$-sparse. Let $U_i\in \R^{m_i\times N_i}$ and assume that
$U_i$ satisfies the NSP$_k$ property for $i\in [d]$. Define
\begin{equation}\label{deftenY}
\cY=[y_{j_1,\ldots,j_d}]=\cX\times_1 U_1\times \ldots \times_d U_d
+\cE \in \R^{m_1\times\ldots\times m_d},
\end{equation}
where $\cE$ is the noise tensor and $\|\cE\|_F\leq \varepsilon$
for some real nonnegative number $\epsilon$. Although the norm of
the noise tensor is not equal across different stages of GTCS-S,
it is assumed that at any given stage, the entries of the error
tensor are independent and identically distributed. The upper
bound of the reconstruction error for GTCS-S recovery in the
presence of noise is derived next by induction on mode $k$.

When $k=1$, unfold $\cY$ in mode $1$ to obtain matrix
$Y_{(1)}\in\R^{m_1\times (m_2\cdot\ldots\cdot m_d)}$. Recover each
$z_i^{(1)}$ by
\begin{equation}
\hat z_i^{(1)}=\arg\min_{z_i^{(1)}} \|z_i^{(1)}\|_1\quad
\text{s.t.}\quad \|c_i(Y_{(1)}) - U_1z_i^{(1)}\|_2 \le
\frac{\varepsilon}{\sqrt{m_2\cdot\ldots\cdot m_d}}.
\end{equation}

Let $\hat Z^{(1)}=[\hat z_1^{(1)}\ldots \hat
z_{m_2\cdot\ldots\cdot m_d}^{(1)}]\in\R^{N_1\times
(m_2\cdot\ldots\cdot m_d)}$. According to Eq. \eqref{sigrecer1},
$\|\hat z_i^{(1)}-c_i(X_{(1)}[\otimes_{k=d}^2 U_k]\trans)\|_2\le
C_2\frac{\varepsilon}{\sqrt{m_2\cdot\ldots\cdot m_d}}$, and
$\|\hat Z^{(1)}-X_{(1)}[\otimes_{k=d}^2 U_k]\trans\|_F\le
C_2\varepsilon$. In tensor form, after folding, this is equivalent
to $\|\hat \cZ^{(1)}-\cX\times_2 U_2\times\ldots \times_d
U_d\|_F\le C_2\varepsilon$.

Assume when $k=n$, $\|\hat \cZ^{(n)}-\cX\times_{n+1}
U_{n+1}\times\ldots \times_d U_d\|_F\le C_2^n\varepsilon$ holds.
For $k=n+1$, unfold $\hat \cZ^{(n)}$ in mode $n+1$ to obtain $\hat
Z^{(n)}_{(n+1)}\in \R^{m_{n+1}\times (N_1\cdot\ldots\cdot
N_{n}\cdot m_{n+2}\cdot\ldots\cdot m_d)}$, and recover each
$z_i^{(n+1)}$ by

\begin{eqnarray}
&&\hat z_i^{(n+1)}=\arg\min_{z_i^{(n+1)}} \|z_i^{(n+1)}\|_1 \quad
\text{s.t.}\nonumber\\&&\|c_i(\hat Z^{(n)}_{(n+1)}) -
U_{n+1}z_i^{(n+1)}\|_2 \le
C_2^n\frac{\varepsilon}{\sqrt{N_1\cdot\ldots\cdot N_{n}\cdot
m_{n+2}\cdot\ldots\cdot m_d}}.
\end{eqnarray}

Let $\hat Z^{(n+1)}=[\hat z_1^{(n+1)}\ldots \hat
z_{N_1\cdot\ldots\cdot N_{n}\cdot m_{n+2}\cdot\ldots\cdot
m_d}^{(n+1)}]\in\R^{N_{n+1}\times (N_1\cdot\ldots\cdot N_{n}\cdot
m_{n+2}\cdot\ldots\cdot m_d)}$. Then $\|\hat
z_i^{(n+1)}-c_i(X_{(n+1)}[\otimes_{k=d}^{n+2} U_k]\trans)\|_2\le
C_2^{n+1}\frac{\varepsilon}{\sqrt{N_1\cdot\ldots\cdot N_{n}\cdot
m_{n+2}\cdot\ldots\cdot m_d}}$, and $\|\hat
Z^{(n+1)}-X_{(n+1)}[\otimes_{k=d}^{n+2} U_k]\trans\|_F\le
C_2^{n+1}\varepsilon$. Folding back to tensor form, $\|\hat
\cZ^{(n+1)}-\cX\times_{n+2} U_{n+2}\times\ldots \times_d
U_d\|_F\le C_2^{n+1}\varepsilon$.

When $k=d$, $\|\hat \cZ^{(d)}-\cX\|_F\le C_2^{d}\varepsilon$ by
induction on mode $k$.

\subsubsection{Generalized Tensor Compressed Sensing - Parallelizable Recovery (GTCS-P) in the Presence of Noise}
Let $\cX=[x_{i_1,\ldots,i_d}]\in\R^{N_1\times \ldots \times N_d}$
be $k$-sparse. Let $U_i\in \R^{m_i\times N_i}$ and assume that
$U_i$ satisfies the NSP$_k$ property for $i\in [d]$. Let $\cY$ be
defined as in Eq. \eqref{deftenY}. GTCS-P recovery in the presence
of noise operates as in the noiseless recovery case described in
Section \ref{GTCSPNoiselessSection}, except that $\hat w_i^{(j)}$
is recovered via
\begin{align}\label{defwnoisy}
 &\hat w_i^{(j)}=\arg\min_{w_i^{(j)}}\|w_i^{(j)}\|_1\quad
\text{s.t.}\quad \|U_jw_i^{(j)} - b_i^{(j)}\|_2 \leq
\frac{\epsilon}{2k}, \quad
 i\in [K], j\in [d].
 \end{align}
It follows from the proof of Theorem \ref{compsensmatPN} that the
recovery error of GTCS-P in the presence of noise between the
original tensor $\cX$ and the recovered tensor $\hat \cX$ is
bounded as follows:
\begin{equation*}
\|\hat \cX-\cX\|_F\le C_2^{d}\varepsilon.
\end{equation*}


\subsubsection{Simulation Results}
In this section, we use the same target video and experimental
settings used in Section \ref{noisySimuTensor}. We simulate the
noisy recovery scenario by modifying the observation tensor with
additive, zero-mean Gaussian noise having standard deviation
values ranging from 1 to 10 in steps of 1, and attempt to recover
the target video using Eq. \eqref{noisyrecoveryEqu}.  As before,
reconstruction performance is measured in terms of the average
PSNR across all frames between the recovered and the target video,
and in terms of $\log$ of reconstruction time, as illustrated in
Figs. \ref{NoisyVideoPSNR} and \ref{NoisyVideoTime}. Note that the
illustrated results correspond to the performance of the methods
for a given choice of upper bound on the $l_2$ norm in Eq.
\eqref{noisyrecoveryEqu}; the PSNR numbers can be further improved
by tightening this bound.

\begin{figure}[htb]
\begin{center}
\subfigure[GTCS-P]{\label{SusiePSNR}\includegraphics[width=0.49\linewidth]{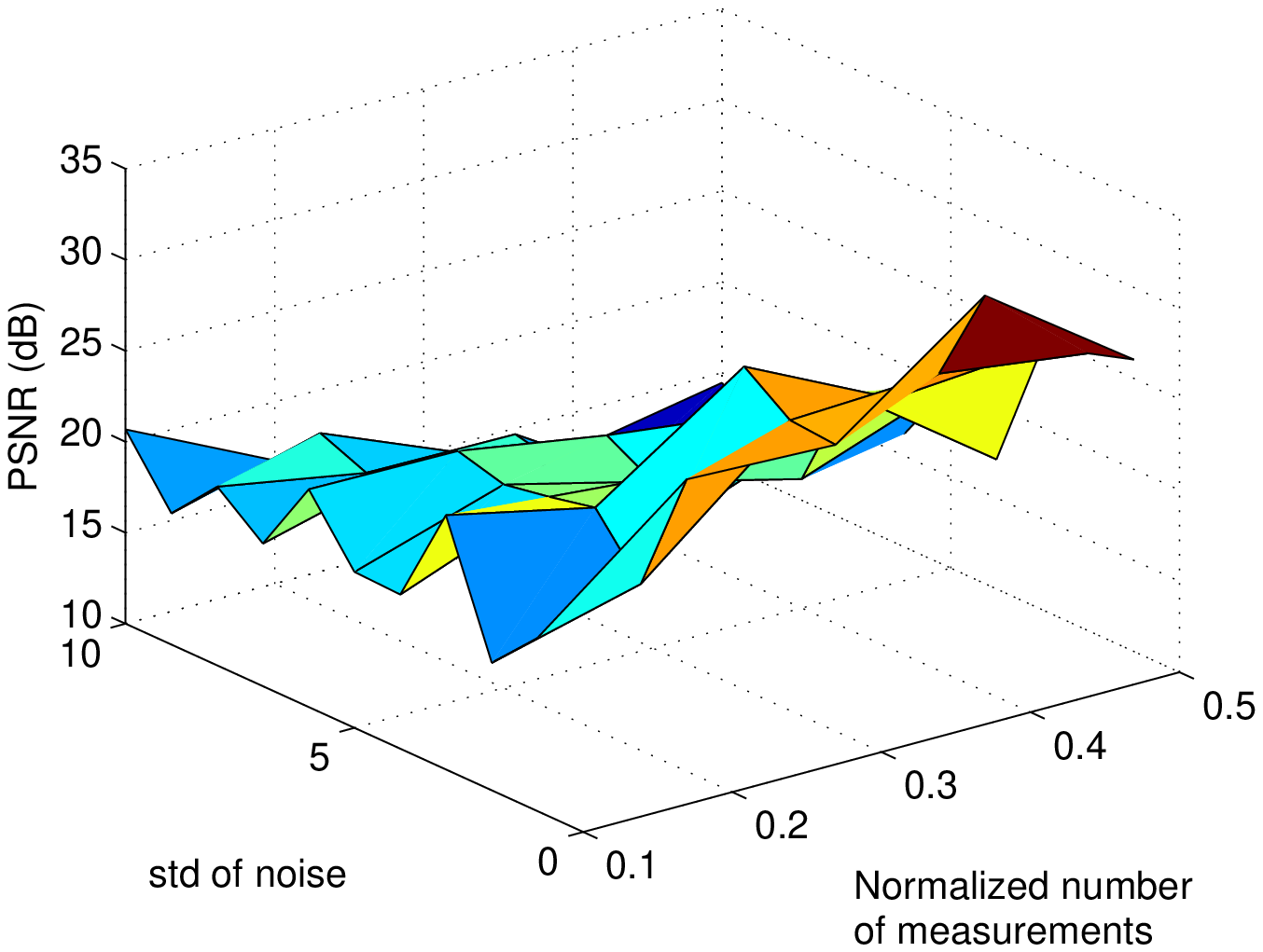}}
\subfigure[KCS]{\label{SusieTime}\includegraphics[width=0.49\linewidth]{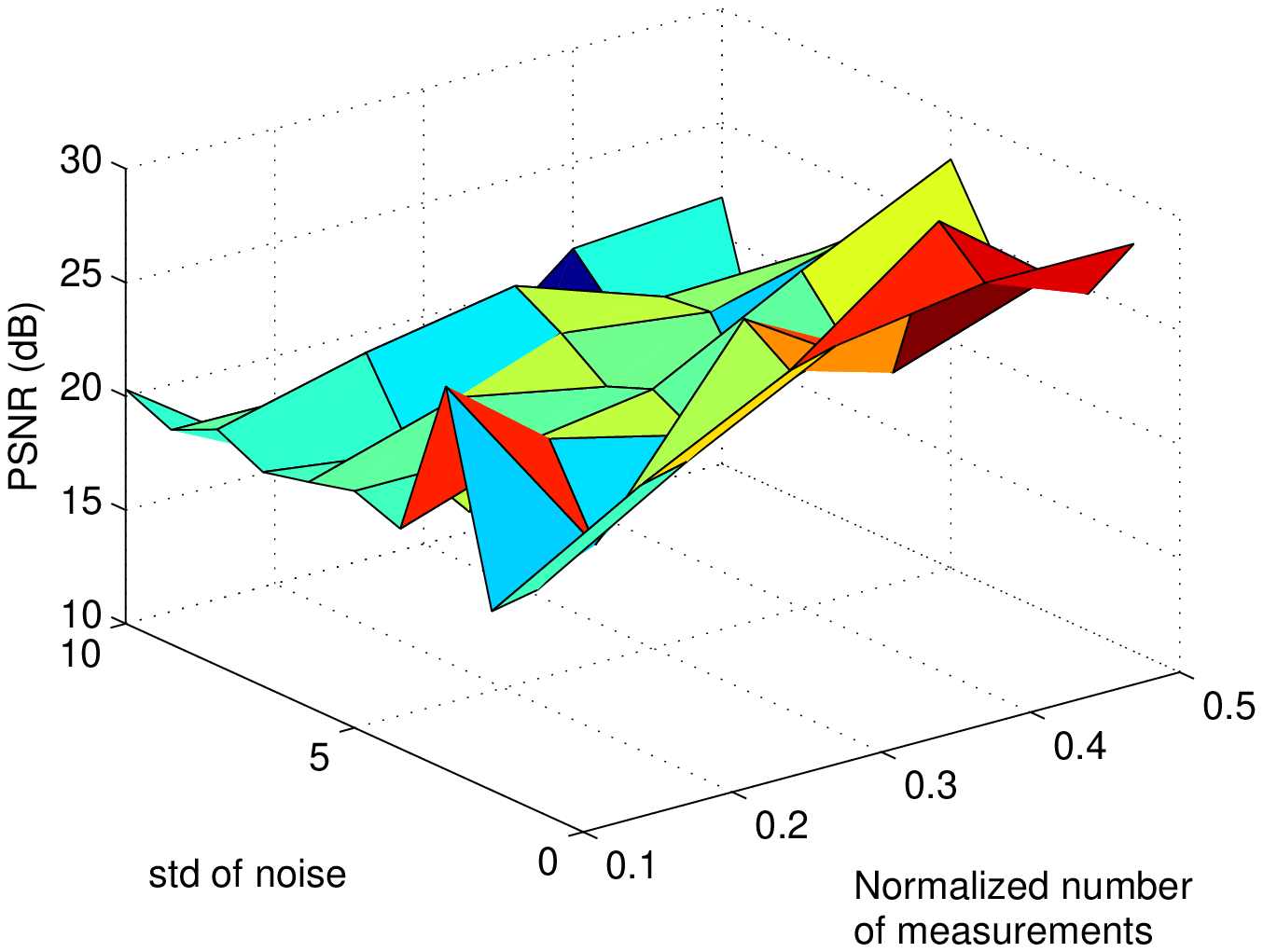}}
\caption{PSNR for the tested methods in the scenario of recovering
the sparse video in the presence of noise.} \label{NoisyVideoPSNR}
\end{center}
\end{figure}

\begin{figure}[htb]
\begin{center}
\subfigure[GTCS-P]{\label{SusiePSNR}\includegraphics[width=0.49\linewidth]{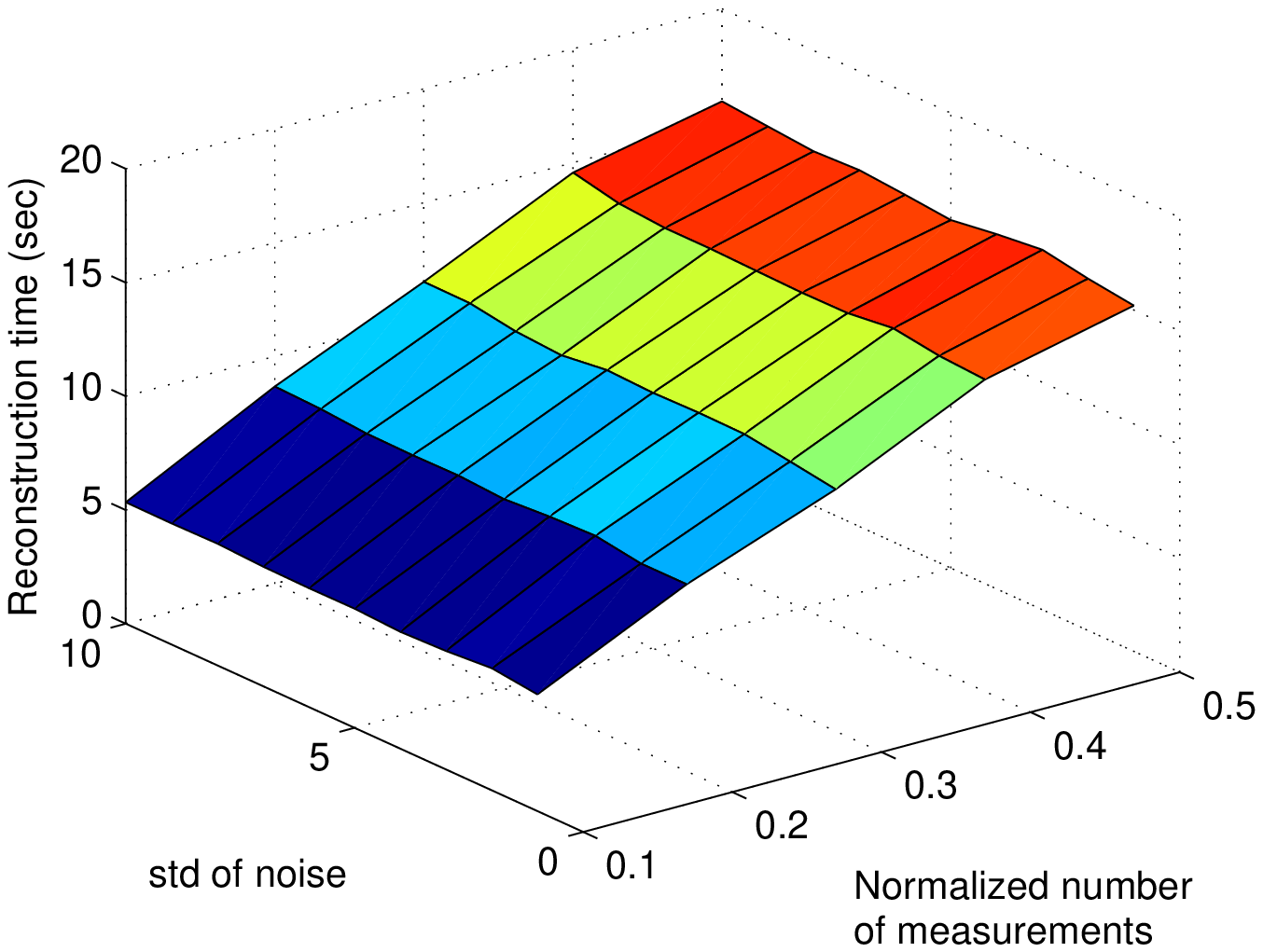}}
\subfigure[KCS]{\label{SusieTime}\includegraphics[width=0.49\linewidth]{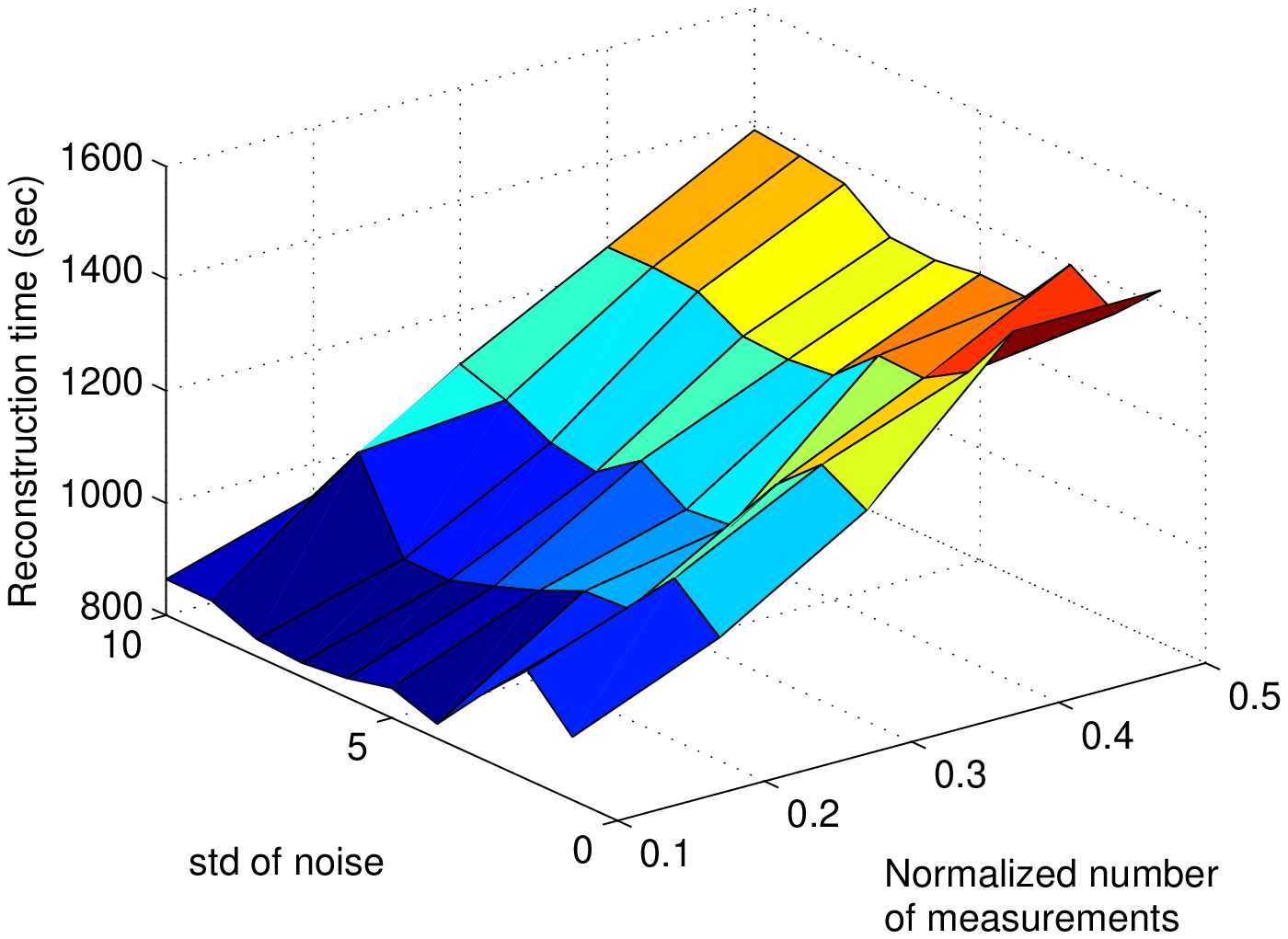}}
\caption{Execution time for the tested methods in the scenario of
recovering the sparse video in the presence of noise.}
\label{NoisyVideoTime}
\end{center}
\end{figure}

\subsection{Tensor Compressibility}
Let $\cX=[x_{i_1,\ldots,i_d}]\in\R^{N_1\times \ldots \times N_d}$.
Assume the entries of the measurement matrix are drawn from a
Gaussian or Bernoulli distribution as described above. For a given
level of reconstruction accuracy, the number of measurements for
$\cX$ required by GTCS should satisfy
\begin{equation}\label{comprsratten}
m\ge 2^dc^d \prod_{i\in[d]} \ln \frac{N_i}{k}.
\end{equation}
Suppose that $N_1=\ldots N_d=N^{\frac{1}{d}}$. Then
\begin{equation}\label{comprsratten1}
m\ge 2^dc^d  (\ln \frac{N^{\frac{1}{d}}}{k})^d=2^dc^d
(\frac{1}{d}\ln N-\ln k)^d.
\end{equation}
On the other hand, the number of measurements required by KCS
should satisfy
\begin{equation}\label{comprsratten2}
m\ge 2c \ln \frac{N}{k}.
\end{equation}

Note that the lower bound in Eq. \eqref{comprsratten2} is
indicative of a better compression ratio relative to that in Eq.
\eqref{comprsratten1}. In fact, this phenomenon has been observed in
simulations (see Ref.~\cite{FLS13}), which indicate that KCS
reconstructs the data with better compression ratios than GTCS.

\section{Conclusion}\label{conclusion}
In applications involving color images, video sequences, and
multi-sensor networks, the data is intrinsically of high-order,
and thus more suitably represented in tensorial form. Standard
applications of CS to higher-order data typically involve
representation of the data as long vectors that are in turn
measured using large sampling matrices, thus imposing a huge
computational and memory burden. As a result, extensions of CS
theory to multidimensional signals have become an emerging topic.
Existing methods include Kronecker compressed sensing (KCS) for
sparse tensors and multi-way compressed sensing (MWCS) for sparse
and low-rank tensors. KCS utilizes Kronecker product matrices as
the sparsifying bases and to represent the measurement protocols
used in distributed settings. However, due to the requirement to
vectorize multidimensional signals, the recovery procedure is
rather time consuming and not applicable in practice. Although
MWCS achieves more efficient reconstruction by fitting a low-rank
model in the compressed domain, followed by per-mode
decompression, its performance relies highly on the quality of the
tensor rank estimation results, the estimation being an NP-hard
problem. We introduced the Generalized Tensor Compressed Sensing
(GTCS)--a unified framework for compressed sensing of higher-order
tensors which preserves the intrinsic structure of tensorial data
with reduced computational complexity at reconstruction. We
demonstrated that GTCS offers an efficient means for
representation of multidimensional data by providing simultaneous
acquisition and compression from all tensor modes. We introduced
two reconstruction procedures, a serial method (GTCS-S) and a
parallelizable method (GTCS-P), both capable of recovering a
tensor based on noiseless and noisy observations, and compared the
performance of the proposed methods with Kronecker compressed
sensing (KCS) and multi-way compressed sensing (MWCS). As shown,
GTCS outperforms KCS and MWCS in terms of both reconstruction
accuracy (within a range of compression ratios) and processing
speed. The major disadvantage of our methods (and of MWCS as
well), is that the achieved compression ratios may be worse than
those offered by KCS. GTCS is advantageous relative to
vectorization-based compressed sensing methods such as KCS because
the corresponding recovery problems are in terms of a multiple
small measurement matrices $U_i$'s, instead of a single, large
measurement matrix $A$, which results in greatly reduced
complexity. In addition, GTCS-P does not rely on tensor rank
estimation, which considerably reduces the computational
complexity while improving the reconstruction accuracy in
comparison with other tensorial decomposition-based method such as
MWCS.

\section*{Appendix}
\addcontentsline{toc}{section}{Appendix}

Let $X=[x_{ij}]\in\R^{N_1\times N_2}$ be $k$-sparse. Let $U_i\in
\R^{m_i\times N_i}$, and assume that $U_i$ satisfies the NSP$_k$
property for $i\in [2]$.  Define $Y$ as
\begin{equation}
Y=[y_{pq}]=U_1 X U_2\trans \in \R^{m_1\times m_2}.
\end{equation}
Given a rank decomposition of $X$, $X=\sum_{i=1}^r z_i u_i\trans$,
where $\rank(X) = r$, $Y$ can be expressed as
\begin{equation}\label{appendixDecomp}
Y=\sum_{i=1}^r (U_1z_i)(U_2u_i)\trans,
\end{equation}
which is also a rank-$r$ decomposition of $Y$, where
$U_1\z_1,\ldots,U_1\z_r$ and $U_2\u_1,\ldots,U_2\u_r$ are two sets
of linearly independent vectors.
\begin{proof}
Since $X$ is $k$-sparse, $\rank (Y)\le \rank (X)\le k$.
Furthermore, both $R(X)$, the column space of $X$, and $R
(X\trans)$ are vector subspaces whose elements are $k$-sparse.
Note that $\z_i\in R (X), \u_i\in R(X\trans)$. Since $U_1$ and
$U_2$ satisfy the NSP$_k$ property, then $\dim (U_1 R(X))= \dim
(U_2R (X\trans))=\rank (X)$. Hence the decomposition of $Y$ in Eq.
\eqref{appendixDecomp} is a rank-$r$ decomposition of $Y$, which
implies that $U_1\z_1,\ldots,U_1\z_r$ and
$U_2\u_1,\ldots,U_2\u_r$ are two sets of linearly independent
vectors. This completes the proof. \qed
\end{proof}
%
%
%
\biblstarthook{}

\end{document}